\newcommand{\onote}[1]{\footnote{{\bf \color{blue}Ryan}: {#1}}}
\newcommand{\rnote}[1]{\footnote{{\bf \color{red}Rocco}: {#1}}}
\newcommand{\err}{\nu}
\newcommand{\thedist}{\calD}
\newcommand{\thedistsym}{\calD^{\textnormal{sym}}}
\newcommand{\estim}{\wh{\calD}}
\newcommand{\generalNoise}{\textnormal{Noise}}
\newcommand{\erasure}{\textnormal{Erase}}
\newcommand{\flip}{\textnormal{Flip}}
\newcommand{\zints}[1]{[0..#1]}
\newenvironment{proofof}[1]{\par
  \pushQED{\qed}%
  \normalfont \topsep6\p@\@plus6\p@\relax
  \trivlist
  \item[\hskip\labelsep
\emph{    Proof of #1\@addpunct{.}}]\ignorespaces
}{%
  \popQED\endtrivlist\@endpefalse
}
\begin{document}

\title{Sharp bounds for population recovery}

\author{
Anindya De\\
Northwestern University\\
{\tt de.anindya@gmail.com}
\and
Ryan O'Donnell\thanks{Supported by NSF grant CCF-1618679.}\\
Carnegie Mellon University\\
{\tt odonnell@cs.cmu.edu}
\and
\and Rocco A.~Servedio\thanks{Supported by NSF grants CCF-1420349 and CCF-1563155.}\\
Columbia University \\
{\tt rocco@cs.columbia.edu}
}

\maketitle

\begin{abstract}
The \emph{population recovery problem} is a basic problem in noisy unsupervised learning  that has attracted significant research attention in recent years~\cite{WY12,DRWY12, MS13, BIMP13, LZ15,DST16}. A number of different variants of this problem have been studied, often under assumptions on the unknown distribution (such as that it has restricted support size).  In this work we study the sample complexity and algorithmic complexity of the most general version of the problem, under both bit-flip noise and erasure noise model.  We give essentially matching upper and lower sample complexity bounds for both noise models, and efficient algorithms matching these sample complexity bounds up to polynomial factors.

\end{abstract}

\section{Introduction}  \label{sec:intro}

\subsection{The erasure noise and bit-flip noise population recovery problems}

The \emph{noisy population recovery (NPR)} problem is to learn an unknown probability distribution $\thedist$ on $\{0,1\}^n$, under $\err$-noise, to $\ell_\infty$-accuracy~$\eps$.\footnote{With high probability.  Because we are not concerned with logarithmic factors in our time/sample complexity, we will for simplicity omit discussion of the standard tricks (independent repetition, taking the median of estimators) used to boost success probabilities. We will also always assume, without loss of generality, that $\eps$ is at most some sufficiently small absolute constant. }  In this problem the learner gets access to independent \emph{samples}~$\by$, each distributed as follows:  First $\bx \sim \thedist$, and then $\by \sim \generalNoise_\err(\bx)$, where $\generalNoise_\err(\cdot)$ denotes either the application of bit-flip noise or erasure noise (described below).  The learner's task is to output an estimate $\estim$ of $\thedist$ satisfying $\|\estim - \thedist\|_\infty \leq \eps$ (with high probability). For the sake of a compact representation, we assume the learner only outputs the nonzero values of~$\estim$; this means that a successful learner need only output $O(1/\eps)$ nonzero values.  
We are interested in minimizing both the \emph{sample complexity} and the \emph{running time} of learning algorithms.

A simpler variation of the NPR problem is the \emph{estimation} task.  Here the algorithm doesn't need to output a complete~$\estim$; it only needs to output an $\eps$-accurate estimate of~$\thedist(u)$ for a given input $u \in \{0,1\}^n$.  Certainly the estimation task is no harder than full NPR; conversely, it is known and not hard (see Section~\ref{sec:reduc}) that given the ability to do estimation, one can do full NPR with just a $\poly(n,1/\eps)$ factor slowdown.  Hence we mainly focus on estimation in this paper.

As mentioned above, we consider two different models of noise.  Each involves a parameter $0 < \err < 1$; smaller values of~$\err$ correspond to more noise, so $\err$ may be better thought of as a ``correlation'' parameter.

\paragraph{Erasure noise.} For $x \in \{0,1\}^n$ we define $\erasure_{1-\err}(x)$ to be the distribution on $\{0,1,?\}^n$ given by independently replacing each coordinate of~$x$ with the symbol `$?$' with probability $1-\err$.  Thus~$\err$ is the retention probability for each coordinate.

\paragraph{Bit-flip noise.}  For $x \in \{0,1\}^n$ we define $\flip_{\frac{1-\err}{2}}(x)$ to be the distribution on $\{0,1\}^n$ given by independently flipping each coordinate of~$x$ with probability $\frac{1-\err}{2}$.  Equivalently, each coordinate of $x$ is retained with probability $\err$ (as in erasure noise), and is otherwise replaced with a uniformly random bit.  This is also the model of noise associated to the so-called ``Bonami--Beckner noise operator''~$T_\err$.

\subsection{Our results}

For the bit-flip noise population recovery problem, our main result is a lower bound on the sample complexity of estimation, as well as a full NPR algorithm whose running time (hence also sample complexity) matches it up to polynomial factors:

%
%
%
%

\begin{theorem} \label{thm:main-bit-flip}
   Let $\eps > 0$ be sufficiently small and let $n \in \N$.  Then any estimation algorithm for NPR with bit-flip noise must use at least the following number of samples:
   \[
    \begin{cases}
        \exp\left(\Theta\left(n^{1/3} \cdot \ln^{2/3}(1/\epsilon)/\err^{2/3} \right)\right) & \quad \text{if } \frac{\ln(1/\eps)}{n} \leq \err \leq 1/2, \\
        \exp\left(\Theta\left(n^{1/3} \cdot \ln^{2/3}(1/\epsilon) \cdot (1-\err)^{1/3}\right)\right) & \quad\text{if } 1/2 \leq \err \leq 1-\frac{\ln(1/\eps)}{n}. \\
    \end{cases}
    \]
    Furthermore, there is an algorithm for the full NPR problem with bit-flip noise having running time and samples equal to the above times $\poly(n,1/\eps)$.
\end{theorem}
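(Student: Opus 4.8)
The plan is to derive both the lower bound and the matching algorithm from a single one-dimensional extremal quantity. First, by the reduction of Section~\ref{sec:reduc} it suffices to treat the estimation task, at the cost of a $\poly(n,1/\eps)$ factor in time and samples. To estimate $\thedist(u)$, XOR every sample coordinate-wise by $u$; since bit-flip noise commutes with this XOR, this reduces to the case $u=0^n$. The stabilizer of $0^n$ in the isometry group of the cube is the full symmetric group $S_n$, and bit-flip noise commutes with coordinate permutations, so by averaging we may assume the estimator depends only on the Hamming weights of its samples. Writing $\pi_w:=\Pr_{\bx\sim\thedist}[|\bx|=w]$ we have $\thedist(0^n)=\pi_0$, and if $|\bx|=w$ then $|\by|$ is distributed as $\mathcal B_w:=\mathrm{Bin}\!\bigl(w,\tfrac{1+\err}{2}\bigr)\ast\mathrm{Bin}\!\bigl(n-w,\tfrac{1-\err}{2}\bigr)$. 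So the task becomes: estimate the mixing weight $\pi_0$ from i.i.d.\ samples of $\sum_w\pi_w\mathcal B_w$.

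A linear estimator is given by a function $p\colon\{0,\dots,n\}\to\mathbb R$: on samples $\by^{(1)},\dots,\by^{(m)}$ it outputs $\tfrac1m\sum_i p(|\by^{(i)}|)$. Its expectation under $\thedist$ is $\sum_w\pi_w(\mathcal L p)(w)$, where $(\mathcal L p)(w):=\mathbb E_{k\sim\mathcal B_w}[p(k)]$; the key algebraic fact is that in the basis of Krawtchouk polynomials $\mathcal K_0,\dots,\mathcal K_n$ the operator $\mathcal L$ is exactly the Bonami--Beckner operator $T_\err$, i.e.\ $\mathcal L\mathcal K_j=\err^j\mathcal K_j$. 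Hence if $\|\mathcal L p-\delta_0\|_\infty\le\eps/2$ (where $\delta_0$ is the indicator of $0$ on $\{0,\dots,n\}$) then the estimator's expectation is within $\eps/2$ of $\thedist(0^n)$ for every $\thedist$, its per-sample variance is at most $\|p\|_\infty^2$, and it is evaluable in $\poly(n)$ time; a Chernoff bound then gives an estimation algorithm using $O(\|p\|_\infty^2\log(1/\mathrm{fail})/\eps^2)$ samples. Thus, up to $\poly(n,1/\eps)$ factors, everything reduces to understanding
\[
  V\ :=\ \min\Bigl\{\,\|p\|_\infty\ :\ p\colon\{0,\dots,n\}\to\mathbb R,\quad \|T_\err p-\delta_0\|_\infty\le\eps\,\Bigr\},
\]
and we will show $V=\exp\!\bigl(\Theta(n^{1/3}\ln^{2/3}(1/\eps)/\err^{2/3})\bigr)$ in the first regime, with the analogous bound (replace $1/\err^{2/3}$ by $(1-\err)^{1/3}$) in the second. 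For the upper bound on $V$ we take $p=T_{1/\err}q$ for a suitable ``near-spike'' $q$: morally, $q$ is a rescaled Chebyshev polynomial of a carefully chosen degree $d$ that equals $1$ at $0$ and is at most $\eps$ on the integers where $T_\err p$ would otherwise be dangerous, and one tunes $d$ so as to balance the sharpness requirement on $q$ against the $\err^{-\Theta(d)}$-type blow-up incurred by applying $T_{1/\err}$ — it is this balance that produces the cube-root exponent rather than the square-root-type bound one gets from the crudest choices. The running-time claim for full NPR is then immediate: run the estimation algorithm (which only needs to know $n,\eps,\err$), with the $\poly(n,1/\eps)$ overhead of the reduction from estimation to NPR.

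For the sample-complexity \emph{lower} bound, I would dualize the extremal problem above. A near-optimal dual solution is a signed measure $\nu$ on $\{0,\dots,n\}$ with $\nu_0=1$, $\sum_w\nu_w=0$, bounded $\ell_1$-norm, and $\bigl\|\sum_w\nu_w\mathcal B_w\bigr\|_1$ exponentially small (at most $1/V$ up to constants); such a $\nu$ is built from the Chebyshev polynomial that is extremal for ``largest value at the edge point $0$ subject to being at most $1$ on $\{1,\dots,n\}$,'' again with its degree tuned to the noise parameter. From $\nu$ one reads off two distributions $\thedist^{(0)},\thedist^{(1)}$ on $\{0,1\}^n$ (split $\nu$ into positive and negative parts, lift each weight level to the uniform distribution on it, and pad to genuine probability distributions) with $\thedist^{(1)}(0^n)-\thedist^{(0)}(0^n)>2\eps$ but whose bit-flip-noised versions are at total variation distance $O(\eps/V)$; a standard two-point (Le Cam) argument then shows $\Omega(V/\eps)$ samples are necessary, and since $\ln(V/\eps)=\Theta(\ln V)$ in the stated parameter range this matches the upper bound. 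The two cases $\err\le\tfrac12$ and $\err\ge\tfrac12$ are handled in parallel, with $1-\err$ playing the role of $\err$ when $\err\ge\tfrac12$ (the inverse-noise parameter $1/\err$ being close to $1$); the side conditions $\ln(1/\eps)/n\le\err$ and $\err\le1-\ln(1/\eps)/n$ are precisely the ranges in which the exponent stays below $\Theta(n)$, i.e.\ in which the bound is not the trivial $\exp(\Theta(n))$.

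The main obstacle — and the technical core of the argument — is the tight two-sided evaluation of $V$. Getting the constant in the exponent of the Chebyshev construction to match the dual lower bound requires careful uniform estimates on (i) the action of $T_{1/\err}$ on near-spike polynomials (equivalently, control of their Krawtchouk-coefficient norms) and (ii) the extremal growth of Chebyshev polynomials just outside an interval, carried out over the entire range of $n,\eps,\err$. By comparison, the reduction to estimation, the symmetrization, the passage to the weight-mixture problem, the Chernoff analysis of the algorithm, and the Le Cam analysis of the lower bound are all routine.
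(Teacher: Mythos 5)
Your setup is sound: the reduction to estimation, the symmetrization to Hamming weights, and the passage to a one-dimensional extremal quantity are all correct, and your $V$ is essentially the LP-dual of the quantity $\eta(\eps,\err)=\min\|pA-p'A\|_1$ that the paper works with (the paper's algorithm is an $\ell_1$-fitting LP rather than a linear estimator, but the two formulations control each other up to $\poly(n,1/\eps)$ factors). The genuine gap is that the entire content of the theorem is the two-sided evaluation $V=\exp\bigl(\Theta(n^{1/3}\ln^{2/3}(1/\eps)/\err^{2/3})\bigr)$, and you do not prove it --- you explicitly defer it as ``the technical core'' --- while the method you sketch for it would not give the stated exponent. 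A Chebyshev-type polynomial in the weight variable that equals $1$ at $0$ and is at most $\eps$ on $\{1,\dots,n\}$ must have degree $d=\Omega(\sqrt{n\ln(1/\eps)})$, and applying $T_{1/\err}$ then costs $\err^{-\Theta(d)}$; this is exactly the square-root-type exponent you allude to, and lowering $d$ destroys the spike property, so no tuning of $d$ alone recovers $n^{1/3}\ln^{2/3}(1/\eps)$. The cube root comes from a different mechanism: passing to the generating-function variable on the unit circle, where the weight $\bigl(1+(1-\err^2)\sin^2(\theta/2)/\err^2\bigr)^{-n/2}$ localizes the problem to an arc of width $\theta^*$ around $1$, and one trades $\theta^*$ off against (for the sample-complexity lower bound) Erd\'elyi's construction of a polynomial with $\asymp\sqrt{M\ln(1/L)}$ repeated roots at $1$ whose constant coefficient dominates the coefficient sum, and (for the matching upper bound) the Borwein--Erd\'elyi estimate that a polynomial with bounded coefficients and constant coefficient $1$ exceeds $\exp(-c(1+\ln M)/a)$ somewhere on every arc of length $a$. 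Neither is a standard interval-Chebyshev fact, and your dual construction (``the Chebyshev polynomial extremal for largest value at the edge point $0$'') again yields only the square-root trade-off.

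A further sign that the extremal computation has not been done is your treatment of the regime $1/2\le\err\le1-\ln(1/\eps)/n$. You assert that ``$1-\err$ plays the role of $\err$,'' which would predict an exponent of order $n^{1/3}\ln^{2/3}(1/\eps)\cdot(1-\err)^{-2/3}$, diverging as $\err\to1$ even though the problem becomes noiseless there. The correct bound, $n^{1/3}\ln^{2/3}(1/\eps)\cdot(1-\err)^{1/3}$, is not obtained by exchanging $\err$ and $1-\err$: both cases of the theorem are instances of the single expression $(n(1-\err^2))^{1/3}\ln^{2/3}(1/\eps)\,\err^{-2/3}$ that falls out of the arc-localization argument. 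You would need to carry out that computation (or an equivalent one in your Krawtchouk dual) uniformly over the stated parameter range before the theorem is established.
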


Prior to this work and the very recent and independent work of \cite{PTW17}, no nontrivial upper or lower bounds were known even for the sample complexity of the general bit-flip noise population recovery problem.  (See \cite{WY12,LZ15,DST16} for earlier works that gave upper bounds and algorithms under the additional assumption that the unknown distribution $\thedist$ is guaranteed to be supported on at most $k$ strings.)

For the erasure noise population recovery problem, our main results are also essentially matching upper and lower bounds on sample complexity and an algorithm whose running time is polynomial in $n$ and the sample complexity:

\begin{theorem} \label{thm:main-erasure}
Let $\eps > 0$ be sufficiently small and let $n \in \N$.  

\begin{itemize}
\item Assume that $\sqrt{16\ln(1/\eps)/n} \leq \err \leq 1/160$.
 Then any estimation algorithm for NPR with erasure noise  must use at least $1/\epsilon^{\Omega(1/\nu)}$ samples.
    \item  There is an algorithm for the full NPR problem with erasure noise using time and samples at most $\poly(n, 1/\eps^{1/\err})$.
\end{itemize}
\end{theorem}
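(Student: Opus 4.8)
In both parts the crux is an extremal problem — about univariate polynomials, equivalently signed measures — adapted to the $\err$-erasure channel; the quantitative content is in estimating that extremal quantity sharply.

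\textbf{Lower bound.} An $\eps$-accurate estimator for $\calD(0^n)$ must distinguish every pair $\calD_1,\calD_2$ with $|\calD_1(0^n)-\calD_2(0^n)|>2\eps$, so it suffices to construct such a pair whose $\err$-erased distributions $\calD_1^{\erasure},\calD_2^{\erasure}$ require $\eps^{-\Omega(1/\err)}$ samples to tell apart; I would get this from a bound $\|\calD_1^{\erasure}-\calD_2^{\erasure}\|_{\mathrm{TV}}\le\eps^{\Omega(1/\err)}$, or — for a possibly stronger conclusion — from the corresponding $\chi^2$-divergence bound. The starting identity is
\[
\bigl\|\calD_1^{\erasure}-\calD_2^{\erasure}\bigr\|_1 \;=\; \E_{K}\Bigl[\,\bigl\|(\calD_1-\calD_2)_{K}\bigr\|_1\,\Bigr],
\]
where $K\subseteq[n]$ is random with each coordinate kept independently with probability $\err$ and $\mu_K$ denotes the marginal of a (signed) measure $\mu$ on the coordinates of $K$. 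Since $|K|\approx\err n$ with overwhelming probability, it is enough to make every marginal of $\xi:=\calD_1-\calD_2$ onto $\approx\err n$ coordinates extremely small while keeping $\xi(0^n)=2\eps$ and $\|\xi\|_1\le2$ (so $\xi$ is a difference of two genuine distributions). Restricting $\calD_1,\calD_2$ to a one-parameter family of strings — symmetric Hamming-weight classes, or ``prefix'' strings $1^d0^{n-d}$ — reduces this to a one-dimensional extremal question about a signed measure $\sigma$ on $\{0,\dots,n\}$ with $\sigma(0)=2\eps$ and $\|\sigma\|_1\le2$ (roughly: its partial sums over every length-$\Theta(1/\err)$ window, equivalently its ``erasure-noised moments'', must be tiny), which I would settle with a tuned family of extremal polynomials. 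The hypotheses $\err\ge\sqrt{16\ln(1/\eps)/n}$ (so $\err n$ leaves room for the construction) and $\err\le1/160$ (so the Chernoff control on $|K|$ is comfortable) are used precisely here.

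\textbf{Upper bound.} First reduce full NPR to the estimation task, paying only a $\poly(n,1/\eps)$ factor (Section~\ref{sec:reduc}); to estimate $\calD(u)$, XOR every sample with $u$ so that WLOG $u=0^n$ (erasure noise commutes with this relabeling). For a sample $\by\in\{0,1,?\}^n$ let $W(\by)$ be the number of retained coordinates equal to $1$; if $\bx$ underlies $\by$ then $W\sim\mathrm{Bin}(d(\bx,0^n),\err)$ (with $d(\cdot,\cdot)$ Hamming distance), and the inclusion–exclusion estimator $(1-1/\err)^{W(\by)}$ is already unbiased for $\calD(0^n)$ but has variance as bad as $\bigl((1-\err)/\err\bigr)^n$. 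Instead I would use an estimator $f(\by)$ of the form ``$\psi(W(\by))$, truncated to samples with $W(\by)\le T$'' for a polynomial $\psi$ and threshold $T$, designed so that $\E[f(\by)\mid\bx]$ is within $\eps$ of $\mathbf 1[\bx=0^n]$ for every $\bx$ (whence $|\E f-\calD(0^n)|\le\eps$ for every $\calD$). This amounts to asking of the induced univariate profile $h$ that $h(0)=1$ and $|h(k)|\le\eps$ for all $1\le k\le n$, the truncation being what keeps $h$ controllable for large $k$ — i.e.\ for distributions concentrated far from $u$, whose bias contribution I would bound by a separate Chernoff/Bonferroni estimate. Taking $h$ to be a suitably affinely rescaled Chebyshev polynomial and optimizing the degree of $\psi$ and the threshold $T$ against the variance $\sup_k\E_{W\sim\mathrm{Bin}(k,\err)}[\psi(W)^2]$ should give $\mathrm{Var}\le\poly(n)\cdot\eps^{-O(1/\err)}$; then $O(\mathrm{Var}/\eps^2)$ i.i.d.\ samples and the empirical mean complete the estimation, and the running time matches up to $\poly$ factors since $f$ costs $O(n)$ per sample.

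The genuinely hard step, common to both directions, is the sharp analysis of this extremal polynomial/moment quantity. Merely annihilating the first $\log(1/\eps)$ moments of $\sigma$ (dually, bounding the variance by a crude sup-norm of $\psi$) yields only the weaker exponent $\Theta(\log(1/\err))$ (resp.\ roughly $\sqrt n$); recovering the stated $\Theta(1/\err)$ requires exploiting the concentration of $W(\by)$ around $\err\cdot d(\bx,u)$ together with a delicately tuned extremal family, which is where essentially all of the work lies.
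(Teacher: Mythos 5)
Your setup is sound and matches the paper's framework at a high level: both reduce the theorem to an extremal question about a signed measure $c$ with $c_0>2\eps$, $\|c\|_1\le 2$, pushed through the erasure channel, and the symmetrization/reduction-to-estimation steps you describe are exactly the paper's Section~2. Your algorithmic route, however, is genuinely different: the paper does not build an unbiased estimator $\psi(W(\by))$ at all. It takes $O(n/\delta^2)$ samples to get an empirical $\wh q$ with $\|\wh q - q\|_1<\delta$, solves the LP $\min_{p'}\|\wh q - p'A\|_1$ over probability vectors, and derives correctness purely from a \emph{lower} bound on $\eta(\eps,\err)=\min\|cA\|_1$; no explicit dual certificate is ever constructed. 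Your estimator approach is the LP-dual of this and could in principle work, but it carries an extra burden (constructing $\psi$ and bounding its variance) that the paper's non-constructive LP route avoids.

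The genuine gap is that in both directions you defer exactly the step on which the theorem rests, namely the sharp two-sided estimate $\eps^{\Theta(1/\err)}$ for the extremal quantity, and you say so yourself (``which I would settle with a tuned family of extremal polynomials,'' ``which is where essentially all of the work lies''). Concretely: (i) for the algorithm, the paper must show that \emph{every} admissible $Q$ satisfies $\max_{u\in\bdry D_\err(1-\err)}|Q(u)|\ge 2\eps^{(1-\err)/\err}$; this is done by mapping the unit circle, the circle through $0$ and $1$, and $\bdry D_\err(1-\err)$ to three parallel lines via $u\mapsto 1/(1-u)$ and applying the Hadamard three-lines theorem. Nothing in your sketch substitutes for this, and your Chebyshev-based $\psi$ is not verified to achieve variance $\eps^{-O(1/\err)}$ (indeed, as you note, crude choices give strictly weaker exponents, and prior work along your ``direct estimator'' lines only achieved $(n/\eps)^{O(\log(1/\err)/\err)}$). (ii) For the sample-complexity lower bound, the paper constructs $Q$ from Erd\'elyi's lemma (a length-bounded polynomial with $\approx\frac{2}{7}\sqrt{M\ln(1/L)}$ repeated roots at $1$), controls it on the real segment $[1-\Theta(\err),1]$ via Lemma~5.4 of \cite{BEK:97}, and then transfers that real-segment bound to the whole circle $\bdry D_{4a}(1-4a)$ via the Hadamard three-circles theorem applied to the Joukowski images $B_{a,r}$. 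Your reduction to ``a signed measure whose windowed/erasure-noised moments are tiny'' is the right target, but without an explicit construction achieving $\eps^{\Omega(1/\err)}$ (and a mechanism, like the three-circles step, for converting control on a real interval into control on the relevant complex circle), the lower bound is not proved.
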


For this problem, in earlier work \cite{MS13} gave an algorithm with sample complexity and running time $(n/\eps)^{O(\log(1/\err)/\err)}$.

Finally, we note that  in very recent and independent work, \cite{PTW17} have obtained very similar results to Theorems~\ref{thm:main-bit-flip} and~\ref{thm:main-erasure} for the population recovery problem.

\subsection{Our techniques}

Our approach is similar in spirit to, and shares some technical similarities with, the recent work of \cite{DOS16,NP16} on the trace reconstruction problem.  We  take an analytic view on the combinatorial process defined by the bit-flip and erasure noise operators, and convert the sample complexity questions for these population recovery problems to questions about the extrema of real-coefficient polynomials satisfying certain conditions on various circles in the complex plane; we then obtain our sample complexity bounds by analyzing these extremal polynomial questions. The main algorithmic ingredient in our results is linear programming.

\section{Prelimaries}

\subsection{Well-known preliminary reductions} \label{sec:reduc}

\paragraph{Estimation, enumeration, and recovery.}  Variants of the NPR problem with relaxed goals have been studied in the literature.  One is the aforementioned \emph{estimation} problem.  Another (complementary) variant is called \emph{enumeration}:  in the enumeration problem, the learning algorithm is only required to output a list of strings $x_1, \dots, x_m$ that is guaranteed (with high probability) to include all strings that have probability at least~$\eps$ under~$\calD$; such strings are sometimes referred to as ``heavy hitters.''  Batman et al.~\cite{BIMP13} gave a range of results for the enumeration problem.

It is easy to see that a solution to the estimation problem can be efficiently bootstrapped to full NPR given the ability to solve the enumeration problem (simply run estimation, with a sufficiently boosted success probability, on each of the $m$ strings in the list obtained from enumeration).  In turn, it is also well known that an estimation algorithm can be efficiently transformed into an enumeration algorithm  via a ``branch-and-prune'' approach.  Roughly speaking, such an approach maintains a not-too-large (size at most $O(1/\eps)$) set of $i$-bit prefixes that is known to contain all the ``heavy hitters''; to construct the set of $(i+1)$-bit prefixes, the approach first ``branches'' to extend each $i$-bit prefix $x$ to both $x0$ and $x1$, and then ``prunes'' any element of $\{x0,x1\}$ that is determined, using the estimation procedure, not to be a heavy hitter.  (Note that since only heavy hitters are maintained it will again be the case that the set of $(i+1)$-bit prefixes has size at most $O(1/\eps)$.)  As \cite{BIMP13} observe, an early example of such a branch-and-prune routine that performs enumeration given an oracle for estimation is the Goldreich--Levin algorithm~\cite{GoldreichLevin:89} for list-decoding the Hadamard code.  Both Dvir et al.~\cite{DRWY12} and Batman et al.~\cite{BIMP13} give fairly detailed analyses of the above-described reduction from enumeration to estimation; we omit the details here and refer the interested reader to Section~6.1 of \cite{DRWY12} and Section~2 of~\cite{BIMP13} respectively.

Summarizing the reductions discussed above, we have that NPR is (up to polynomial factors) no harder than the estimation problem, and it is also clearly no easier than estimation (since estimation is a subproblem of general NPR).  Thus in the rest of this paper we restrict our attention to the estimation problem.

\ignore{Given such a list, $m$ invocations of the estimation problem (with sufficiently boosted success probability) suffice to solve the full NPR problem.  As an example use of this strategy, we may rely on the following theorem of Batman~et~al.:
\begin{theorem}                                     \label{thm:batman}
    \onote{cite BIMP}
    In the bit-flip model of noise, the enumeration problem can be solved using $n^{O(\log(1/\eps)/\err^2)}$ samples and time.\onote{This comes from Batman's first ``simple'' pruning algorithm.  It already takes a little reading to determine the dependence on $\err$ here.  I wasn't able to figure out the $\err$-dependence in their ``quasipoly-in-$\eps$'', more complicated second pruning algorithm.  Note that $n^{O(\log(1/\eps)/\err^2)}$ is more or less okay for us, although it's not negligible once $\log(1/\eps)/\err$ starts getting close to $1/n^{\Theta(1)}$.  Have to work this all out more carefully at some point, or else avoid using BIMP as a black box.}
\end{theorem}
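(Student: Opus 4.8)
The plan is to combine the standard reduction from enumeration to per-string estimation with an estimation subroutine built from a truncated, regularized pseudoinverse of the bit-flip noise operator.

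\textbf{Reduction.} Run the branch-and-prune strategy sketched in Section~\ref{sec:reduc}. For $i=0,1,\dots,n$ we maintain a list $L_i\subseteq\{0,1\}^i$: $L_0$ contains only the empty string, and $L_{i+1}$ is obtained by replacing each $p\in L_i$ by $p0$ and $p1$ and then deleting any $pb$ for which the mass of $pb$ under the length-$(i{+}1)$ marginal $\thedist_{i+1}$ of $\thedist$ is estimated — to additive accuracy $\eps/3$ and with sufficiently small failure probability — to be less than $2\eps/3$. Because bit-flip noise acts coordinatewise, the length-$(i{+}1)$ restriction of a noisy sample is exactly a noisy sample from $\thedist_{i+1}$, so each such estimate is an instance of the estimation problem in dimension at most $n$. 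Every $x$ with $\thedist(x)\ge\eps$ has each of its prefixes of marginal mass at least $\eps$, hence is never deleted; and since marginal masses sum to $1$, every $L_i$ has size at most $3/\eps$. Thus the procedure makes $O(n/\eps)$ estimation calls and returns a list containing all heavy hitters, so it suffices to give an estimation algorithm of sample and running-time complexity $n^{O(\log(1/\eps)/\err^2)}$ (boosting its success probability so the union bound over all $O(n/\eps)$ calls goes through costs only a $\log(n/\eps)$ factor, which we suppress as elsewhere).

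\textbf{Estimation.} Fix a target string $u$; XORing every sample bit-wise with $u$ — legitimate since bit-flip noise is invariant under coordinate relabelings — reduces the task to estimating $\thedist(0^n)=\Pr_{\bx}[\bx=0^n]$ to additive error $\eps/2$, where each sample is $\by\sim T_\err\bx$ with $\bx\sim\thedist$. The approach is to write $\thedist(0^n)=\E_{\by}[f(\by)]$ for a cheaply computable $f$ and estimate this expectation empirically. Taking $f$ to be the exact pseudoinverse — in the $\pm1$-Fourier basis, $f(y)=2^{-n}\sum_{S\subseteq[n]}\err^{-|S|}\chi_S(y)$, i.e.\ the tensor power of the inverse of the one-bit channel — makes the estimator exactly unbiased, but $\|f\|_\infty=\Theta(1/\err)^n$, so its variance forces $\exp(\Theta(n\log(1/\err)))$ samples. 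Following \cite{BIMP13}, one instead truncates and regularizes: fix a degree-$d$ univariate polynomial $P$ with coefficients $P_0,\dots,P_d$ and set $f_P(y)=2^{-n}\sum_{|S|\le d}P_{|S|}\,\err^{-|S|}\chi_S(y)$. Since the channel scales the degree-$|S|$ Fourier weight of $\thedist$ by $\err^{|S|}$, one gets $\E[f_P(\by)]=2^{-n}\sum_{|S|\le d}P_{|S|}\,\widehat\thedist(S)$, whose deviation from $\thedist(0^n)$ is governed by how well $P$ mimics the constant polynomial $1$ on the relevant range, while $\|f_P\|_\infty\le 2^{-n}\sum_{k=0}^{d}|P_k|\,\err^{-k}\binom nk$. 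As shown in \cite{BIMP13}, one can choose $P$ — essentially an appropriately scaled and shifted Chebyshev polynomial — of degree roughly $\log(1/\eps)/\err^2$ so that the bias is at most $\eps/2$ for \emph{every} $\thedist$ while $\|f_P\|_\infty\le n^{O(\log(1/\eps)/\err^2)}$. Granting this, $f_P$ is evaluated on a sample in $n^{O(\log(1/\eps)/\err^2)}$ time, and $n^{O(\log(1/\eps)/\err^2)}/\eps^2$ samples bring the empirical variance below $(\eps/2)^2$; with the reduction above this proves the theorem.

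\textbf{The main obstacle} is the construction of the regularizing polynomial $P$: one needs a low-degree real polynomial that is large (say $1$) at one prescribed point but tiny on a cluster of $\Theta(n)$ nearby points — which encode the possible Hamming distances of $\bx$ from $0^n$, blurred by the $\Theta(\sqrt n)$-width fluctuation of the noise — and whose coefficients are controlled relative to the weights $\binom nk\err^{-k}$ so that $\|f_P\|_\infty$ stays $n^{O(\log(1/\eps)/\err^2)}$. This is an extremal-polynomial estimate, solved via a Chebyshev polynomial together with a concentration bound for the noise; we do not reproduce the computation, referring instead to Section~2 of \cite{BIMP13} (and Section~6.1 of \cite{DRWY12} for the routine but lengthy verification of the enumeration-from-estimation reduction). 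We stress that $n^{O(\log(1/\eps)/\err^2)}$ is not the true answer: pushing precisely this extremal-polynomial analysis further to obtain the matching bound of Theorem~\ref{thm:main-bit-flip} is one of the main technical contributions of the present paper.
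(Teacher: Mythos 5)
There is a genuine gap, and it is structural rather than a missing detail. The statement you were asked about is, in this paper, a black-box quotation of a result of Batman et al.~\cite{BIMP13} (the authors never prove it), so the only question is whether your sketch would constitute a proof — and it would not, because its central subroutine cannot exist. Your plan is ``enumeration via branch-and-prune calls to additive-$\eps$ estimation of prefix probabilities,'' and you then claim an estimation subroutine with sample and time complexity $n^{O(\log(1/\eps)/\err^2)}$. But this paper itself proves (Proposition~\ref{prop:sc-lower} together with Theorem~\ref{thm:bitflip-pos}, i.e.\ the lower bound in Theorem~\ref{thm:main-bit-flip}) that \emph{estimation} under bit-flip noise requires $\exp\bigl(\Omega(n^{1/3}\ln^{2/3}(1/\eps)/\err^{2/3})\bigr)$ samples; for constant $\eps$ and $\err$ this is $\exp(\Omega(n^{1/3}))$, whereas $n^{O(\log(1/\eps)/\err^2)}$ is only $\poly(n)$. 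Since your pruning step calls estimation on prefixes of length up to $n$ (which is exactly the worst-case estimation problem), this route can never do better than $\poly(n,1/\eps)$ times the estimation complexity, and hence cannot yield the stated $n^{O(\log(1/\eps)/\err^2)}$ bound. Enumeration is genuinely easier than estimation in this regime, and any correct proof must exploit the slack enumeration allows (false positives are permitted; one only needs to control the list size and never delete a heavy hitter) — which is what the actual \cite{BIMP13} pruning argument does — rather than route through uniform additive-$\eps$ estimation.

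Concretely, the ``granting this'' step fails: the polynomial $P$ you attribute to \cite{BIMP13} — degree $O(\log(1/\eps)/\err^2)$, bias at most $\eps/2$ against \emph{every} $\thedist$, and $\|f_P\|_\infty \le n^{O(\log(1/\eps)/\err^2)}$ — does not exist. Bias at most $\eps/2$ for every $\thedist$ is equivalent (taking $\thedist$ to be point masses) to requiring that $T_\err f_P$ approximate the indicator of $0^n$ pointwise on $\{0,1\}^n$ to error $\eps/2$; after symmetrization this is the $\eps$-error approximate-degree question for a point indicator on $\{0,1,\dots,n\}$, whose answer is $\Theta(\sqrt{n\log(1/\eps)})$, not $O(\log(1/\eps)/\err^2)$. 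And once the degree is forced to be that large, the $\err^{-|S|}$ amplification in $f_P$ pushes $\|f_P\|_\infty$ (hence the sample complexity of this estimator) up to exactly the $\exp\bigl(\Theta(n^{1/3}\ln^{2/3}(1/\eps)/\err^{2/3})\bigr)$ scale that this paper identifies as optimal for estimation — that trade-off is the content of Corollary~\ref{cor:polys} and the bit-flip bounds, and it cannot be beaten by a better choice of $P$. So both halves of your argument that carry the quantitative content (the claimed subroutine and its attribution to \cite{BIMP13}) are incorrect; a valid write-up would have to reproduce the \cite{BIMP13} heavy-hitters pruning analysis itself (e.g., its tests based on small windows of $O(\log(1/\eps)/\err^2)$ coordinates), not the estimation-based reduction sketched here.
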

}

\paragraph{Symmetrization.}  We further recall some well-known tricks that have been used in  past papers on NPR.  First, in the estimation problem, we may assume without loss of generality that the string~$u$ whose probability is to be estimated is $u = (0, \dots, 0)$.  This is because the learner can easily convert samples from $\thedist$ to samples from ``$\thedist \oplus u$''.

Next, for the problem of estimating $\thedist(0, \dots, 0)$, we may assume without loss of generality that $\thedist$ is \emph{symmetric}, meaning that it gives equal probability mass to all strings at the same Hamming weight.  In other words, $\thedist$ is effectively given by a probability distribution $\thedistsym$ on $\zints{n}$, with $\thedist(x) = \thedistsym(|x|)/\binom{n}{|x|}$.  On one hand, if $\thedist(0, \dots, 0)$ can be estimated in the general case, it can certainly be estimated in the symmetric case.  On the other hand, given a general distribution $\thedist$, the learner can randomly permute the coordinates of each sample, effectively obtaining access to samples from a symmetric distribution $\thedistsym$, with $\thedistsym(0, \dots, 0) = \thedist(0, \dots, 0)$.  Thus it suffices for the learner to be able to estimate in the symmetric case.

In this symmetric case, we will write the unknown $\thedistsym$ more simply as a probability (row) vector $[p_0\ p_1\ \cdots \ p_n]$.  Although the learner observes full strings, it may as well only consider the Hamming weights of the strings it receives.  Thus we may think of it as obtaining samples from the probability (row) vector $[q_0\ q_1\ \cdots\ q_n]$, where
\begin{equation} \label{eqn:qpA}
    q = pA, \qquad A_{ij} = \Pr[\text{a weight $i$ string becomes a weight $j$ string under $\err$ noise}].
\end{equation}
It is not hard to write down the entries of $A$ in either noise model.  We remark that, after symmetrization, the bit-flip model becomes equivalent to running the well-known \emph{Ehrenfest urn model} for continuous time $t n$, where $e^{-t} = \err$.  It is easy to write down the known generating function for that model:
\begin{proposition}   [\cite{Siegert47,BellmanHarris51}]                          \label{prop:gen-fcn-flip}
    For $A$ associated to the $\flip_{\frac{1-\err}{2}}$ noise model, and $z$ an indeterminate,
    \[
        \sum_{j=0}^n A_{ij} z^j = \left(\frac{1-\err}{2} + \frac{1+\err}{2}z\right)^i \left(\frac{1+\err}{2} + \frac{1-\err}{2}z\right)^{n - i}.
    \]
\end{proposition}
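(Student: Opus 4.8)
The plan is to exploit the fact that bit-flip noise acts independently on the $n$ coordinates, which makes the weight-generating function factor as a product of per-coordinate contributions. Fix a string $x$ of Hamming weight $i$ and let $\by \sim \flip_{\frac{1-\err}{2}}(x)$. Then $\sum_{j=0}^n A_{ij} z^j$ is exactly $\E\bigl[z^{|\by|}\bigr]$, the probability generating function of the random weight $|\by| = \sum_{k=1}^n \by_k$. Since $\by_1,\dots,\by_n$ are mutually independent, $\E\bigl[z^{|\by|}\bigr] = \prod_{k=1}^n \E\bigl[z^{\by_k}\bigr]$, so it suffices to compute the single-coordinate generating functions $\E\bigl[z^{\by_k}\bigr]$.

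First I would handle a coordinate $k$ with $x_k = 1$: under the noise it becomes $0$ with probability $\frac{1-\err}{2}$ and stays $1$ with probability $\frac{1+\err}{2}$, so $\E\bigl[z^{\by_k}\bigr] = \frac{1-\err}{2} + \frac{1+\err}{2}z$. Next, a coordinate $k$ with $x_k = 0$ is flipped to $1$ with probability $\frac{1-\err}{2}$ and stays $0$ with probability $\frac{1+\err}{2}$, so $\E\bigl[z^{\by_k}\bigr] = \frac{1+\err}{2} + \frac{1-\err}{2}z$. There are exactly $i$ coordinates of the first type and $n-i$ of the second, and multiplying the corresponding factors yields $\left(\frac{1-\err}{2} + \frac{1+\err}{2}z\right)^i \left(\frac{1+\err}{2} + \frac{1-\err}{2}z\right)^{n-i}$, which is the claimed identity. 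Equivalently, one can observe that $|\by|$ is distributed as the sum of two independent binomials $\mathrm{Bin}(i,\tfrac{1+\err}{2})$ (surviving $1$s) and $\mathrm{Bin}(n-i,\tfrac{1-\err}{2})$ ($0$s flipped to $1$s), and then invoke the standard formula $(1-p+pz)^m$ for the generating function of $\mathrm{Bin}(m,p)$.

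There is essentially no hard step: the identity is an immediate consequence of coordinate-wise independence of the noise operator, so the ``obstacle'' is purely bookkeeping. The one thing to be careful about is keeping straight which of the two linear factors corresponds to the original $1$-coordinates versus the original $0$-coordinates, and which of $\frac{1+\err}{2}$, $\frac{1-\err}{2}$ plays the role of ``retain'' versus ``flip'' in each case, since the two factors swap under $i \leftrightarrow n-i$. As a sanity check one can set $z=1$ to confirm the specialization equals $1$, and extract coefficients for small $n$ to recover the explicit convolution-of-binomials formula for $A_{ij}$.
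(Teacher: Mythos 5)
Your proof is correct. The paper states this proposition without proof, simply citing \cite{Siegert47,BellmanHarris51} and remarking that the generating function is ``easy to write down''; your coordinate-wise factorization of $\E\bigl[z^{|\by|}\bigr]$ into the $i$ factors $\frac{1-\err}{2}+\frac{1+\err}{2}z$ and the $n-i$ factors $\frac{1+\err}{2}+\frac{1-\err}{2}z$ is exactly the standard verification the authors have in mind.
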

For the erasure model, the generating function is even simpler.  The following is easily verified:
\begin{proposition}                                     \label{prop:gen-fcn-erase}
    For $A$ associated to the $\erasure_{1-\err}$ noise model, and $z$ an indeterminate,
    \[
        \sum_{j=0}^n A_{ij} z^j = \bigl((1-\err) + \err z\bigr)^i.
    \]
\end{proposition}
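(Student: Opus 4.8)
The plan is to write down the entries of the transition matrix $A$ for the $\erasure_{1-\err}$ model explicitly and then recognize the resulting sum as a binomial expansion. First I would use the fact that erasure noise acts independently on each of the $n$ coordinates, together with the observation that a coordinate of the output string (an element of $\{0,1,?\}^n$) contributes to its Hamming weight exactly when that coordinate originally held a $1$ \emph{and} was not erased. Consequently, for a source string of Hamming weight $i$, its $n-i$ zero coordinates are irrelevant to the output weight, and the output weight equals the number of the $i$ one-coordinates that are retained.

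Since each of those $i$ coordinates is retained independently with probability $\err$, this count is distributed as a $\mathrm{Binomial}(i,\err)$ random variable, and therefore
\[
    A_{ij} = \binom{i}{j}\,\err^{\,j}(1-\err)^{\,i-j}\quad\text{for }0\le j\le i,\qquad A_{ij}=0\quad\text{for }j>i.
\]
In particular $A$ is lower triangular, consistent with the fact that erasure can only decrease Hamming weight. The claimed identity now follows directly from the binomial theorem:
\[
    \sum_{j=0}^n A_{ij}z^j \;=\; \sum_{j=0}^i \binom{i}{j}\,(\err z)^j(1-\err)^{\,i-j} \;=\; \bigl((1-\err)+\err z\bigr)^i .
\]

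I do not anticipate any genuine obstacle here; as the paper itself notes, the statement is ``easily verified.'' The only point requiring slight care is pinning down what ``weight $j$'' means for a string over the ternary alphabet $\{0,1,?\}$: it should count surviving $1$s, which is precisely the permutation-invariant statistic of an erased sample whose distribution depends on the unknown $\thedistsym$ (the number of $?$s being an independent $\mathrm{Binomial}(n,1-\err)$ that carries no information). With that convention fixed, the two displayed steps above constitute the whole proof.
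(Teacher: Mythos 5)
Your proof is correct, and since the paper omits the argument entirely (stating only that the proposition is ``easily verified''), your derivation --- computing $A_{ij} = \binom{i}{j}\err^j(1-\err)^{i-j}$ from the $\mathrm{Binomial}(i,\err)$ count of retained ones and applying the binomial theorem --- is exactly the intended verification. Your remark clarifying that the ``weight'' of a string in $\{0,1,?\}^n$ counts surviving $1$s (the number of $?$s being uninformative) is the right point to pin down and is handled correctly.
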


To recap, in the estimation problem the learner's task is to estimate $p_0$ to accuracy~$\eps$, given samples from~$q$.  We recall the well-known fact that, by taking the empirical distribution of $O(n/\delta^2)$ samples, the learner may obtain an estimate $\wh{q}$ of $q$ satisfying $\|\wh{q} - q\|_1 \leq \delta$ (with high probability).  Although $q = pA$, as noted in previous works one unfortunately cannot effectively estimate $p_0$ simply as the first coordinate of $\wh{q}A^{-1}$, because $A$ is very poorly conditioned.  Instead one needs a more sophisticated approach.

\section{Reduction to an analytic problem}

It is not hard to characterize the optimal sample complexity for the estimation problem.  Define
\[
    \eta(\eps,\err) = \min_{\substack{\text{probability vectors } p, p'  \\ |p_0 - p'_0| > 2\eps}} \|pA - p'A\|_1
\]
(where the parameter $\err$ implicitly appears within~$A$). If two probability vectors $p$ and $p'$ have $|p_0 - p'_0| > 2\eps$, then a successful estimation algorithm must be able to distinguish the two cases.  But if $q = pA$, $q' = p'A$ are close, in the sense that $\|q - q'\|_1 \leq \delta$, then a learning algorithm will need $\Omega(1/\delta)$ samples to distinguish them with high probability.  We conclude:
\begin{proposition} \label{prop:sc-lower}
    The sample complexity of any population recovery algorithm --- indeed, any estimation algorithm --- is $\Omega(1/\eta(\eps,\err))$.
\end{proposition}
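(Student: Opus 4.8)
The statement is just the careful formalization of the two sentences preceding it, so the plan is simply to fill those in. First, since estimation is a sub-task of full NPR (one can read $\estim(0^n)$ off of $\estim$), it suffices to lower-bound the sample complexity of estimation. Fix any $\tau > 0$; by definition of $\eta(\eps,\err)$ as an infimum, I can choose probability vectors $p, p'$ on $\zints{n}$ with $|p_0 - p'_0| > 2\eps$ and
\[
  \delta \;:=\; \|pA - p'A\|_1 \;\le\; \eta(\eps,\err) + \tau .
\]
Through the symmetrization dictionary $\thedist(x) = \thedistsym(|x|)/\binom{n}{|x|}$ of Section~\ref{sec:reduc}, the vectors $p$ and $p'$ correspond to genuine distributions on $\{0,1\}^n$, hence to legitimate estimation instances with correct answers $p_0$ and $p'_0$; and since (again by Section~\ref{sec:reduc}) the learner may as well record only the Hamming weights of its samples, on input $p$ it sees $m$ i.i.d.\ draws from $q := pA$ and on input $p'$ it sees $m$ i.i.d.\ draws from $q' := p'A$, where $m$ denotes its sample complexity.

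Next I would convert the correctness guarantee into a statistical-distinguishing statement. Let $I := [p_0 - \eps,\, p_0 + \eps]$ and $I' := [p'_0 - \eps,\, p'_0 + \eps]$; these closed intervals are disjoint because $|p_0 - p'_0| > 2\eps$. Let $E$ be the event (over the samples together with the algorithm's internal coins) that the output estimate lies in $I$. By the correctness assumption, $\Pr[E] \ge 2/3$ when the samples come from $q^{\otimes m}$, whereas $\Pr[E] \le 1/3$ when they come from $q'^{\otimes m}$, since a correct output must then lie in the disjoint interval $I'$. The ``test'' accepting exactly on $E$ thus has acceptance probabilities differing by at least $1/3$ on $q^{\otimes m}$ and $q'^{\otimes m}$, and the acceptance probability of any randomized test differs on two distributions by at most their total variation distance, so
\[
  \tfrac12\bigl\| q^{\otimes m} - q'^{\otimes m} \bigr\|_1 \;\ge\; \tfrac13 .
\]

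Finally I would invoke subadditivity of total variation distance under products (the usual one-line hybrid argument): $\|q^{\otimes m} - q'^{\otimes m}\|_1 \le m\,\|q - q'\|_1 = m\delta$. Combining with the previous display gives $m\delta \ge 2/3$, i.e.\ $m \ge \tfrac{2}{3\delta} \ge \tfrac{2}{3(\eta(\eps,\err)+\tau)}$; since $m$ does not depend on $\tau$, letting $\tau \to 0^+$ yields $m \ge \tfrac{2}{3\,\eta(\eps,\err)} = \Omega(1/\eta(\eps,\err))$. I do not expect a genuine obstacle here: the only points needing a moment's care are that the infimum in the definition of $\eta$ is over the \emph{open} condition $|p_0 - p'_0| > 2\eps$ (which is precisely why one argues with an infimum and a $\tau$-slack rather than pretending a minimizing pair with the strict inequality exists) and bookkeeping the constant success probability, neither of which affects the $\Omega(\cdot)$ conclusion. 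If one wanted to be fastidious one could also observe that $\eta(\eps,\err) > 0$ for $\err \in (0,1)$, since $A$ is then invertible and so $p \mapsto pA$ is injective, but this fact is not needed for the lower bound itself.
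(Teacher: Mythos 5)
Your proposal is correct and is exactly the argument the paper intends: the paper's own ``proof'' is just the two informal sentences preceding the proposition (distinguishing $q^{\otimes m}$ from $q'^{\otimes m}$ at $\ell_1$ distance $\delta$ needs $\Omega(1/\delta)$ samples), and you have filled in the standard details (near-minimizers via an infimum argument, the correctness-to-distinguisher step, and subadditivity of total variation under products) without changing the route.
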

On the other hand, suppose the lower bound $\eta(\eps,\err) \geq \delta$ holds.  Consider an estimation algorithm that first produces an empirical estimate $\wh{q}$ with $\|\wh{q} - q\|_1 < \delta$ using $O(n/{\delta}^2)$ samples, and then exactly solves the following optimization problem using linear programming:
\[
    \min_{\text{probability vectors } p'} \| \wh{q} - p'A\|_1.
\]
(This can be efficiently written as an LP with $O(n)$ variables and constraints and with rational numbers of  $\poly(n)$ bit-complexity.\footnote{For simplicity in this paper we assume that $\eps$ and $\err$ are rational quantities of $\poly(n)$ bits known to the learning algorithm.})  We claim that any optimal solution $p'$ will have $|p_0 - p'_0|  \leq 2\eps$.  Otherwise, by definition $\|pA - p'A\|_1 > \eta(\eps,\err) \geq \delta$; but $\|\wh{q} - pA\|_1 < \delta$, a contradiction.  Thus we get an efficient solution to the estimation problem (except with accuracy only $2\eps$).  In conclusion, we have established the following:
\begin{proposition} \label{prop:upper}
    The estimation problem can be solved with $\poly(n, 1/\eta(\eps/2,\err))$ time and samples.
\end{proposition}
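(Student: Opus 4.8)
The plan is to estimate the transformed vector $q = pA$ empirically and then ``invert'' it approximately by linear programming, letting the definition of $\eta$ do the work of converting closeness in $q$-space back into accuracy on $p_0$.

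First I would set $\delta := \eta(\eps/2,\err)$. The constraint $|p_0 - p'_0| > 2\cdot(\eps/2) = \eps$ defining this quantity is a union of two half-spaces, so $\delta$ --- or a rational lower bound on it of $\poly(n)$ bit-complexity, which suffices --- is computable in $\poly(n)$ time by solving two linear programs that minimize $\|pA - p'A\|_1$ over a product of two simplices; hence the algorithm may assume it knows $\delta$. Next I would draw $m = O(n/\delta^2)$ samples and let $\wh q$ be the empirical distribution of the observed Hamming weights; by the standard $\ell_1$ deviation bound for empirical distributions on a domain of size $n+1$, with high probability $\|\wh q - q\|_1 < \delta/3$.

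The core step is then to solve to optimality the program $\min \|\wh q - p'A\|_1$ over probability vectors $p'$. Written with the usual auxiliary variables for the $\ell_1$ objective, this is an LP with $O(n)$ variables and constraints whose coefficients are the entries of $A$ (given explicitly by Propositions~\ref{prop:gen-fcn-flip} and~\ref{prop:gen-fcn-erase}) together with those of $\wh q$, all rationals of $\poly(n)$ bit-complexity under the standing conventions, so it is solvable exactly in $\poly(n)$ time. Let $p'$ be an optimum. Since the true $p$ is feasible with value $\|\wh q - pA\|_1 < \delta/3$, we get $\|\wh q - p'A\|_1 < \delta/3$, hence $\|pA - p'A\|_1 < 2\delta/3 < \delta = \eta(\eps/2,\err)$ by the triangle inequality; by the definition of $\eta$ this forces $|p_0 - p'_0| \leq \eps$, so I would output $p'_0$. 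The total cost is $O(n/\delta^2)$ samples plus $\poly(n)$ time, i.e.\ $\poly(n, 1/\eta(\eps/2,\err))$.

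I do not expect a serious obstacle here; the only points needing a line of justification are that the LP is genuinely solvable in polynomial time at the stated bit-complexity, and that $\eta(\eps/2,\err)$ is itself efficiently (approximately) computable so that the sample count is well-defined --- both handled by the rationality conventions and the case-split above. All of the real difficulty of the paper is thereby pushed onto \emph{lower}-bounding $\eta(\eps,\err)$, which this proposition (together with Proposition~\ref{prop:sc-lower}) shows is the single quantity controlling both sample and time complexity up to polynomial factors.
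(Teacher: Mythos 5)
Your proposal is correct and follows essentially the same route as the paper: empirically estimate $q$ to $\ell_1$-accuracy $O(\eta(\eps/2,\err))$, solve the LP $\min_{p'} \|\wh{q} - p'A\|_1$ over probability vectors, and invoke the definition of $\eta$ to conclude $|p_0 - p'_0| \leq \eps$. If anything you are slightly more careful than the paper (the explicit triangle-inequality bookkeeping with $\delta/3$, and the observation that $\eta$ itself is computable via two LPs so the sample count is well-defined), but the substance is identical.
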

\noindent Thus we see that, up to polynomial factors, both the sample complexity and runtime complexity of the estimation problem is effectively controlled by the parameter $\eta(\eps,\err)$.

We now further simplify the definition of $\eta(\eps,\err)$, similar to what was done in \cite{DOS16}. The difference of two probability vectors is precisely any vector in the set
\[
    \Delta = \{[c_0\ c_1\ \cdots\ c_n] : \littlesum_i c_i = 0,\ \littlesum_i |c_i| \leq 2\}.
\]
Thus we have that
\[
    \eta(\eps,\err) = \min_{\substack{c \in \Delta\\ c_0 > 2\eps}} \|cA\|_1.
\]
Furthermore, elementary complex analysis (see e.g. Proposition 3.5 of \cite{DOS16}) shows that, for complex~$z$ and column vector $\mathfrak{z} = (1, z, z^2, \dots, z^n)$,
\[
    \max_{|z| = 1} |cA\mathfrak{z}| \leq \|cA\|_1 \leq \sqrt{n+1} \cdot \max_{|z| = 1} |cA\mathfrak{z}|.
\]
Note also that $cA\mathfrak{z}$ is a polynomial in~$z$ that is easily calculated from the generating function of the noise process (see Propositions~\ref{prop:gen-fcn-flip},~\ref{prop:gen-fcn-erase}). We obtain:
\begin{theorem}                                     \label{thm:polys}
    Up to a factor of~$\sqrt{n+1}$ on the outside, and up to a factor of~$2$ on $\eps$, we have that
    \[
        \eta(\eps,\err) \asymp \min_{\substack{c \in \Delta\\ c_0 > 2\eps}}
                                        \begin{cases}
                                            \max_{|z| = 1} |F_c(z)| & \text{in the $\flip_{\frac{1-\err}{2}}$ noise model,} \\
                                            \max_{|z| = 1} |E_c(z)| & \text{in the $\erasure_{1-\err}$ noise model,}
                                         \end{cases}
    \]
    where
    \begin{align}
            F_c(z) &= \sum_{i=0}^n c_i \left(\frac{1-\err}{2} + \frac{1+\err}{2}z\right)^i \left(\frac{1+\err}{2} + \frac{1-\err}{2}z\right)^{n - i},
            \label{eq:Fc}\\
            E_c(z)  &= \sum_{i=0}^n c_i \bigl((1-\err) + \err z\bigr)^i. \label{eq:Ec}
    \end{align}
\end{theorem}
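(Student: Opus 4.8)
The plan is to assemble three ingredients that the preceding discussion has already prepared. First I would record that the set of differences $p-p'$ of probability vectors is exactly $\Delta$, so that, after negating $c$ if necessary (which keeps $c \in \Delta$), one has $\eta(\eps,\err) = \min_{c \in \Delta,\ c_0 > 2\eps} \|cA\|_1$. The ``factor of $2$ on $\eps$'' in the statement is a harmless normalization disclaimer: it lets one pass freely between the strict constraint $c_0 > 2\eps$ and whatever convenient normalization (say $c_0 = 2\eps$, reached by scaling a feasible $c$ down, which only decreases $\|cA\|_1$ and keeps $c \in \Delta$) is used downstream.

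Second, I would invoke the elementary fact (Proposition~3.5 of \cite{DOS16}) that for any real vector $w = [w_0\ \cdots\ w_n]$ and $\mathfrak{z} = (1,z,\dots,z^n)$,
\[
  \max_{|z|=1}\Bigl|\littlesum_j w_j z^j\Bigr|\ \le\ \|w\|_1\ \le\ \sqrt{n+1}\cdot\max_{|z|=1}\Bigl|\littlesum_j w_j z^j\Bigr|.
\]
The left inequality is just the triangle inequality on $|z| = 1$. For the right inequality one goes through $\ell_2$: Parseval on the unit circle gives $\|w\|_2 \le \max_{|z|=1}|\littlesum_j w_j z^j|$, and Cauchy--Schwarz over the $n+1$ coordinates gives $\|w\|_1 \le \sqrt{n+1}\,\|w\|_2$. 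Applying this with $w = cA$ and then taking the minimum over feasible $c$ at both ends preserves the sandwich, which is exactly the claimed $\asymp$ with the stated $\sqrt{n+1}$ loss on the outside.

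Third, I would identify the polynomial $cA\mathfrak{z}$ explicitly. Swapping the (finite) order of summation,
\[
  cA\mathfrak{z}\ =\ \sum_{j=0}^n \Bigl(\sum_{i=0}^n c_i A_{ij}\Bigr) z^j\ =\ \sum_{i=0}^n c_i \sum_{j=0}^n A_{ij} z^j,
\]
so it remains only to plug in the row generating function $\sum_j A_{ij} z^j$. By Proposition~\ref{prop:gen-fcn-flip} this yields $cA\mathfrak{z} = F_c(z)$ as in \eqref{eq:Fc} in the $\flip_{\frac{1-\err}{2}}$ model, and by Proposition~\ref{prop:gen-fcn-erase} it yields $cA\mathfrak{z} = E_c(z)$ as in \eqref{eq:Ec} in the $\erasure_{1-\err}$ model; combining with the previous paragraph would give the theorem.

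There is no genuine obstacle here --- the statement is essentially a bookkeeping consolidation of Propositions~\ref{prop:sc-lower},~\ref{prop:upper},~\ref{prop:gen-fcn-flip},~\ref{prop:gen-fcn-erase} and the surrounding remarks --- and the only point worth a moment's care is the right-hand inequality above: one must route through $\ell_2$, since the naive per-coefficient bound $|w_j| \le \max_{|z|=1}|\littlesum_j w_j z^j|$ would give only a factor of $n+1$ rather than $\sqrt{n+1}$. All of the real analytic work of the paper --- namely, estimating $\min_c \max_{|z|=1}|F_c(z)|$ and $\min_c \max_{|z|=1}|E_c(z)|$ for the two noise models --- lies beyond this reduction.
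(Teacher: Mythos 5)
Your proposal is correct and follows essentially the same route as the paper: identify differences of probability vectors with $\Delta$, sandwich $\|cA\|_1$ between $\max_{|z|=1}|cA\mathfrak{z}|$ and $\sqrt{n+1}$ times that quantity, and evaluate $cA\mathfrak{z}$ via the row generating functions of Propositions~\ref{prop:gen-fcn-flip} and~\ref{prop:gen-fcn-erase}. The only difference is that you spell out the Parseval/Cauchy--Schwarz proof of the sandwich inequality, which the paper simply cites from Proposition~3.5 of \cite{DOS16}.
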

Given $c \in \Delta$ with $c_0 > 2\eps$, define the following polynomial (with real coefficients and a complex parameter):
\[
    Q_c(v) = \sum_{i=0}^n c_i v^i.
\]
Thus the assumptions on~$c$ are equivalent to $Q_c(0) > 2\eps$, $Q_c(1) = 0$, and $L(Q_c) \leq 2$, where $L(Q_c)$ is the \emph{length} of $Q_c$; i.e., the sum of the absolute values of its coefficients.

In analyzing $E_c$ above, we use that $E_c(z) = \sum_{i=0}^n c_i u^i$, where $u = (1-\err) + \err z$.  As $z$ traces out the unit circle $|z| = 1$, the parameter $u$ traces out the circle $\bdry D_\err(1-\err)$ of radius $\err$ centered at the real value $1-\err$.  Thus
\[
    \max_{|z|=1} |E_c(z)| = \max_{u \in \bdry D_\err(1-\err)} |Q_c(u)|,
\]
where
\[
    Q_c(v) = \sum_{i=0}^n c_i v^i.
\]

In analyzing $F_c$ above, we use that
\begin{equation} \label{eq:Fczw}
    F_c(z) 
    = \Bigl(\tfrac{\err}{\frac{1+\err}{2} - \frac{1-\err}{2}w}\Bigr)^n \sum_{i=0}^n c_i w^i, \qquad \text{where } w = \frac{\frac{1-\err}{2} + \frac{1+\err}{2}z}{\frac{1+\err}{2} + \frac{1-\err}{2}z}.
\end{equation}
The parameter $w$ (being a M\"{o}bius transformation of~$z$) traces out the unit circle as~$z$ does, and for $w = e^{i\theta}$ it is not hard to compute that
\begin{equation} \label{eq:burger}
    \left|\tfrac{\err}{\frac{1+\err}{2} - \frac{1-\err}{2}w}\right|^2 = \frac{2\err^2}{(1-\cos \theta) + (1+\cos \theta)\err^2}
    =  \frac{1}{1 + \frac{(1-\err^2) \sin^2(\theta/2) }{\err^2}}.
\end{equation}
Thus
\begin{equation} \label{eq:FcQ}
    \max_{|z|=1} |F_c(z)| = \max_{-\pi < \theta \leq \pi}   \bigg( \frac{1}{1 + \frac{(1-\err^2) \sin^2(\theta/2) }{\err^2}}\bigg)
^{n/2} \cdot |Q_c(e^{i\theta})|.
\end{equation}
We finally conclude:
\begin{corollary}                                     \label{cor:polys}
    Up to a factor of~$\sqrt{n+1}$ on the outside, and up to a factor of~$2$ on $\eps$, we have that
    \[
        \eta(\eps,\err) \asymp \min_{Q}
                                        \begin{cases}
                                            \displaystyle \max_{-\pi < \theta \leq \pi}   \bigg( \frac{1}{1 + \frac{(1-\err^2) \sin^2(\theta/2) }{\err^2}}\bigg)^{n/2} \cdot |Q(e^{i\theta})|  & \text{in the $\flip_{\frac{1-\err}{2}}$ noise model,} \\
                                            \displaystyle \max_{u \in \bdry D_\err(1-\err)} |Q(u)|, & \text{in the $\erasure_{1-\err}$ noise model,}
                                         \end{cases}
    \]
    where the minimum is over real-coefficient polynomials~$Q$ of degree at most~$n$ satisfying $Q(0) > 2\eps$, $Q(1) = 0$, and $L(Q) \leq 2$.
\end{corollary}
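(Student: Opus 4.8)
The plan is to obtain Corollary~\ref{cor:polys} by feeding the reformulations of $\max_{|z|=1}|F_c(z)|$ and $\max_{|z|=1}|E_c(z)|$ developed just above the statement back into Theorem~\ref{thm:polys}; the content is really bookkeeping, but it should be assembled with care. First I would observe that $c=[c_0\ \cdots\ c_n]\mapsto Q_c$, $Q_c(v)=\sum_{i=0}^n c_iv^i$, is a bijection from the set of real vectors with $\sum_i c_i=0$, $\sum_i|c_i|\le 2$, $c_0>2\eps$ onto the set of real-coefficient polynomials of degree at most~$n$ with $Q(1)=0$, $L(Q)\le 2$, $Q(0)>2\eps$ --- this is exactly the equivalence of constraints recorded after Theorem~\ref{thm:polys}. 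Hence the minimum over $c\in\Delta$ with $c_0>2\eps$ in Theorem~\ref{thm:polys} is literally the minimum over the polynomials~$Q$ in Corollary~\ref{cor:polys}, and it only remains to rewrite the two objective functions in terms of $Q=Q_c$. For the erasure model this is immediate: with $u=(1-\err)+\err z$ one has $E_c(z)=\sum_i c_iu^i=Q_c(u)$, and as $z$ traverses $\{|z|=1\}$ the point~$u$ traverses the circle $\bdry D_\err(1-\err)$ of radius~$\err$ about $1-\err$, so $\max_{|z|=1}|E_c(z)|=\max_{u\in\bdry D_\err(1-\err)}|Q_c(u)|$, which is the second line of the Corollary.

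For the bit-flip model I would verify the two displayed identities \eqref{eq:Fczw} and \eqref{eq:burger}. For \eqref{eq:Fczw}, pull $\bigl(\tfrac{1+\err}{2}+\tfrac{1-\err}{2}z\bigr)^n$ out of \eqref{eq:Fc} to write $F_c(z)=\bigl(\tfrac{1+\err}{2}+\tfrac{1-\err}{2}z\bigr)^n\sum_i c_iw^i$ with $w=\bigl(\tfrac{1-\err}{2}+\tfrac{1+\err}{2}z\bigr)\big/\bigl(\tfrac{1+\err}{2}+\tfrac{1-\err}{2}z\bigr)$; then invert this linear fractional relation for~$z$ and simplify, the numerator collapsing to $\tfrac14\bigl((1+\err)^2-(1-\err)^2\bigr)=\err$, which gives $\tfrac{1+\err}{2}+\tfrac{1-\err}{2}z=\err\big/\bigl(\tfrac{1+\err}{2}-\tfrac{1-\err}{2}w\bigr)$ and hence \eqref{eq:Fczw}. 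Since $F_c$ has real coefficients, for $z=e^{i\phi}$ the denominator $\tfrac{1+\err}{2}+\tfrac{1-\err}{2}z$ equals $e^{i\phi}$ times the complex conjugate of the numerator $\tfrac{1-\err}{2}+\tfrac{1+\err}{2}z$, so $|w|=1$; and as the underlying M\"obius map has nonzero determinant~$\err$ it restricts to a bijection of $\{|z|=1\}$ onto $\{|w|=1\}$. Therefore $\max_{|z|=1}|F_c(z)|=\max_{|w|=1}\bigl|\err\big/(\tfrac{1+\err}{2}-\tfrac{1-\err}{2}w)\bigr|^n\,|Q_c(w)|$. Writing $w=e^{i\theta}$, expanding $\bigl|\tfrac{1+\err}{2}-\tfrac{1-\err}{2}e^{i\theta}\bigr|^2=\tfrac12\bigl((1-\cos\theta)+(1+\cos\theta)\err^2\bigr)$, and using $1-\cos\theta=2\sin^2(\theta/2)$, $1+\cos\theta=2\cos^2(\theta/2)$ then yields \eqref{eq:burger}, hence \eqref{eq:FcQ}; substituting \eqref{eq:FcQ} into Theorem~\ref{thm:polys} finishes the flip case. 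The $\sqrt{n+1}$ factor outside and the factor~$2$ on~$\eps$ are inherited unchanged from Theorem~\ref{thm:polys}.

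I do not expect a genuine obstacle here: the corollary is a reorganization of Theorem~\ref{thm:polys} using the two change-of-variable computations above. The only steps that need a little care are the two I flagged --- checking that $z\mapsto w$ maps the unit circle bijectively onto itself, so that $\max_{|z|=1}$ may be replaced by $\max_{|w|=1}$ with the modulus factor pulled outside, and carrying out the trigonometric reduction of that factor to the form in \eqref{eq:burger} without sign errors. Both are elementary.
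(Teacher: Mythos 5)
Your proposal is correct and follows the paper's own route exactly: the corollary is obtained by substituting the change-of-variables identities \eqref{eq:Fczw}, \eqref{eq:burger}, \eqref{eq:FcQ} and the erasure substitution $u=(1-\err)+\err z$ into Theorem~\ref{thm:polys}, together with the observation that the constraints on $c\in\Delta$ with $c_0>2\eps$ translate precisely into $Q_c(0)>2\eps$, $Q_c(1)=0$, $L(Q_c)\le 2$. Your explicit verifications of the M\"obius identity, of $|w|=1$, and of the trigonometric simplification are all accurate and simply fill in computations the paper leaves as ``not hard to compute.''
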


Combining Propositions~\ref{prop:sc-lower} and~\ref{prop:upper} with Corollary~\ref{cor:polys}, we see that
Theorems~\ref{thm:main-bit-flip} and~\ref{thm:main-erasure} follow from giving bounds on the two quantities specified in Corollary~\ref{cor:polys} (or in Theorem~\ref{thm:polys}).  We give such bounds in the following sections.

\section{Circle bounds for erasure noise}

\subsection{A lower bound on $ \eta(\eps,\err)$ for erasure noise}
Notice that $L(Q) \leq 2$ implies that $|Q(u)| \leq 2$ for all $|u| = 1$.  We have the following:

\begin{theorem}
    Let $Q$ be a complex polynomial with $|Q(0)| \geq 2\eps$ and $|Q(u)| \leq 2$ for $|u| = 1$.  Then for $0 < \err < 1/2$ we have $\max_{u \in \bdry D_\err(1-\err)} |Q(u)| \geq 2\eps^{\frac{1-\err}{\err}}$.  (For $1/2 \leq \err \leq 1$ we immediately get a lower bound of $2\eps$, by the Maximum Modulus Principle.)
\end{theorem}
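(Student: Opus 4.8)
The plan is to transport the two circles to a pair of parallel vertical lines by a M\"obius transformation and then apply the Hadamard three-lines theorem. The first observations are elementary: since $0 < \err < 1/2$, the circle $\bdry D_\err(1-\err)$ is internally tangent to the unit circle at $u=1$ (both meet the real axis perpendicularly there) and also passes through the real point $u = 1-2\err \in (0,1)$; moreover the origin lies strictly inside the unit disk but strictly outside $\overline{D_\err(1-\err)}$, since the distance $1-\err$ from $0$ to the center exceeds the radius $\err$. Consequently the crescent-shaped region $\Omega$ lying inside the unit circle and outside $\bdry D_\err(1-\err)$ is a simply connected domain that contains $0$, bounded by the two circles, which pinch together at $u=1$.

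Next I would use the M\"obius map $\phi(u) = u/(1-u)$, which sends $u=1$ to $\infty$. A direct computation gives $\phi(e^{i\theta}) = -\tfrac12 + \tfrac{i}{2}\cot(\theta/2)$, so $\phi$ carries the unit circle to the line $\{\Re w = -\tfrac12\}$; since the small circle is tangent to the unit circle at the point sent to $\infty$, its image is a parallel line, and as $\phi(1-2\err) = \tfrac{1-2\err}{2\err}$ that line is $\{\Re w = \tfrac{1-2\err}{2\err}\}$; finally $\phi(0) = 0$. Thus $\phi$ maps $\Omega$ conformally onto the open strip $S = \{\, -\tfrac12 < \Re w < \tfrac{1-2\err}{2\err}\,\}$, sending $0 \mapsto 0$. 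Since $Q$ is a polynomial and $\phi^{-1}$ is a M\"obius map, $g := Q\circ\phi^{-1}$ is a rational function, hence continuous on the Riemann sphere and in particular bounded on the closed strip, with $|g| \leq 2$ on $\{\Re w = -\tfrac12\}$ (the image of the unit circle, where $|Q|\le2$) and $|g| \leq M$ on $\{\Re w = \tfrac{1-2\err}{2\err}\}$, where $M := \max_{u\in\bdry D_\err(1-\err)} |Q(u)|$.

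Now I would invoke the Hadamard three-lines theorem: $\sup_y |g(x+iy)|$ is a logarithmically convex function of $x$ on $[-\tfrac12, \tfrac{1-2\err}{2\err}]$, so evaluating at $x=0$ gives $|Q(0)| = |g(0)| \leq 2^{\,1-\lambda} M^{\lambda}$, where $\lambda$ is the affine coordinate of $0$ in the strip, namely $\lambda = \tfrac{0 - (-1/2)}{(1-2\err)/(2\err) - (-1/2)} = \tfrac{\err}{1-\err}$. Combining with the hypothesis $|Q(0)| \geq 2\eps$ gives $M^{\lambda} \geq 2\eps/2^{\,1-\lambda} = 2^{\lambda}\eps$, that is $M \geq 2\,\eps^{1/\lambda} = 2\,\eps^{\frac{1-\err}{\err}}$, which is the claim. (One can phrase the same argument through the two-constants theorem, using that the harmonic measure of $\bdry D_\err(1-\err)$ in $\Omega$ as seen from $0$ equals $\err/(1-\err)$, which the conformal map computes.)

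I do not expect a serious obstacle. The only points needing care are the elementary verifications that $\bdry D_\err(1-\err)$ is internally tangent to the unit circle at $u=1$ and that $\phi$ takes the two circles to the stated vertical lines, and the bookkeeping of which exponent ($\lambda$ versus $1/\lambda$) appears where --- the direction of the final inequality is pinned down by $\eps < 1$, so a larger exponent on $\eps$ weakens the lower bound on $M$.
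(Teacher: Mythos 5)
Your proof is correct and is essentially the paper's argument: both map the internally tangent circles to parallel vertical lines by a M\"obius transformation with pole at $u=1$ and then apply the Hadamard three-lines theorem, landing on the same interpolation exponent $\lambda=\err/(1-\err)$. The only cosmetic difference is your choice of $\phi(u)=u/(1-u)$ (a translate of the paper's $1/(1-u)$) and your use of the point $0$ on the middle line directly rather than the paper's auxiliary circle through $0$.
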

\begin{proof}
    Let $U$ be the unit circle, let $O$ be the circle of radius $1/2$ centered at $1/2$, which lies inside $U$, and let $C = \bdry D_\err(1-\err)$, which lies inside $O$.   The M\"{o}bius transformation $A(u)= 1/(1-u)$ takes these circles to vertical lines $U'$, $O'$, and $C'$ with real parts $1/2$, $1$, and $1/2\err$, respectively.  Defining the function $f(z)=Q(A^{-1}(z))$, we have that $f$ is bounded on the strip defined by $U'$ and $C'$, and we have that $\sup_{y \in U'} |f(y)| \leq 2$, $\sup_{y \in O'} |f(y)| \geq 2\eps$.\ignore{\rnote{Replaced here ``We know that $|Q| \leq 2$ on $U'$ and $|Q| \geq 2\eps$ on $O'$.''}}  Writing $M$ for the maximum modulus of $f$ on $C'$, the Hadamard Three-Lines Theorem implies that
    \[
        2^{\frac{1-2\err}{1-\err}} M^{\frac{\err}{1-\err}} \geq 2\eps,
    \]
    which completes the proof after rearrangement.
\end{proof}

\ignore{
\begin{verbatim}
For any $z$ lying in the strip bounded by the lines $U'$ and $C'$, the point $A^{-1}(z)$ lies on a circle that
passes through 1 and is centered on the real line at some point in the
interval [0,1-\err] (hence the circle is contained in the unit circle)

Consider the function f(z) = Q(A^{-1}(z)).  This is a bounded function of
z=x+iy defined on the strip bounded by U' and C' (as on any such input
point z, the value of f is Q(some point in the unit circle).  It is this function
f for which we know that |f| \leq 2 on U' and \sup_{y on the line O'} |f(y)|
\geq 2\eps, and it's this function that we apply the H3L theorem to (see
e.g. the theorem statement at https://en.wikipedia.org/wiki/Hadamard_three-lines_theorem ).
\end{verbatim}
}

\subsection{An upper bound on $ \eta(\eps,\err)$ for erasure noise}

In this section, we will prove the following theorem:
\begin{theorem}~\label{thm:erasure-upper-bound}
There is an absolute constant $\tau>0 $ such that for every $\nu \le 1/10$, $0<\eps<\tau$ and $\ln(1/\epsilon)/\nu^2 \leq n$, there exists a vector $c \in \Delta$ with $c_0 > 2\epsilon$ such that the polynomial $Q_c(v) = \sum_{i=0}^n c_i v^i$ satisfies
\[
\sup_{v \in \partial D_{\nu/16}(1-\nu/16)} |Q_c(v)|  = \epsilon^{-\Omega(1/\nu)}.
\]
\end{theorem}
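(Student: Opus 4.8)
The plan is to build $Q_c$ by approximating, with a polynomial of controlled length, the extremal function behind the matching lower bound of Section~4.1. That lower bound comes from the Hadamard three‑lines theorem applied after the M\"obius change of variables $w=1/(1-v)$, which straightens the unit circle, the circle $\partial D_{1/2}(1/2)$ through $0$, and $C:=\partial D_{\nu/16}(1-\nu/16)$ into vertical lines of real parts $1/2$, $1$, and $8/\nu$; the extremal function for three lines is exponential in $w$, i.e.\ $v\mapsto \epsilon^{1/(1-v)}$. This is not a polynomial --- it has an essential singularity at $v=1$ --- so I would first \emph{displace the singularity}: fix a small absolute constant $\kappa\in(0,1)$, set $\eta:=\nu/8$, and consider
\[
  \phi(v)\;=\;\epsilon^{\,\kappa/(1+\eta-v)},
\]
which is holomorphic on $|v|<1+\eta$. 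A few elementary computations of $\operatorname{Re}\!\bigl(1/(1+\eta-v)\bigr)$ on the relevant circles give exactly the features we need. At $v=0$: $\phi(0)=\epsilon^{\kappa/(1+\eta)}\ge\epsilon^{\kappa}$, which exceeds $3\epsilon$ once $\epsilon$ is small. On $|v|=1$, since $1+\eta-v$ traces the circle of radius $1$ about $1+\eta$, one has $\operatorname{Re}(1/(1+\eta-v))\ge 1/(2+\eta)$, so $|\phi(v)|\le\epsilon^{\kappa/(2+\eta)}=\epsilon^{\Theta(1)}$ on $\overline{\mathbb D}$. On $C$, since $1+\eta-v$ traces the circle of radius $\nu/16$ about $\eta+\nu/16$, the same monotonicity computation gives $\operatorname{Re}(1/(1+\eta-v))\ge 1/(\eta+\nu/8)=4/\nu$, hence $\max_{v\in C}|\phi(v)|\le\epsilon^{4\kappa/\nu}=\epsilon^{\Omega(1/\nu)}$.

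Next I would take $Q$ to be the degree‑$N$ Taylor truncation of $\phi$ about $0$, with $N:=\lceil \ln(1/\epsilon)/\nu^{2}\rceil\le n$. Since $\phi$ is holomorphic on $|v|<1+\eta$ with $\max_{|v|=\rho}|\phi|\le\epsilon^{\kappa/(1+\eta+\rho)}$ for every $\rho<1+\eta$, Cauchy's estimate (letting $\rho\uparrow1+\eta$) gives $|\phi_m|\le\epsilon^{\kappa/3}(1+\eta)^{-m}$. Summing, $L(Q)=\sum_{m\le N}|\phi_m|\le \epsilon^{\kappa/3}\cdot O(1/\eta)$, while the tail gives $\sup_{|v|\le1}|\phi(v)-Q(v)|\le\sum_{m>N}|\phi_m|\le \epsilon^{\kappa/3}e^{-\Omega(\eta N)}$; with $\eta N\asymp\ln(1/\epsilon)/\nu$ this truncation error is $\epsilon^{\Omega(1/\nu)}$. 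Consequently the constant term is untouched, $Q(0)=\phi(0)>3\epsilon$; $|Q(1)|\le|\phi(1)|+\epsilon^{\Omega(1/\nu)}=\epsilon^{\kappa/\eta}+\epsilon^{\Omega(1/\nu)}=\epsilon^{\Omega(1/\nu)}$; and $\max_{v\in C}|Q(v)|\le\max_{v\in C}|\phi(v)|+\epsilon^{\Omega(1/\nu)}=\epsilon^{\Omega(1/\nu)}$. To produce a valid $c\in\Delta$ I would replace $Q$ by $Q-Q(1)$ (this zeroes the value at $1$ exactly, perturbs $L$ and $\max_{C}|\cdot|$ by at most $|Q(1)|=\epsilon^{\Omega(1/\nu)}$, and leaves the constant term above $2\epsilon$), and then, if needed, multiply by a constant in $(0,1]$ to enforce $L\le 2$ (harmless, as $Q(0)>2\epsilon$ has slack). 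The coefficient vector $c$ of the resulting polynomial lies in $\Delta$ with $c_0>2\epsilon$ and satisfies $\sup_{v\in C}|Q_c(v)|=\epsilon^{\Omega(1/\nu)}$.

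The one step that needs genuine care is the length bound: the argument above only yields $L(Q)\lesssim\epsilon^{\kappa/3}/\eta\asymp\epsilon^{\kappa/3}/\nu$, and enforcing $L\le 2$ requires $\epsilon$ to be suitably small (this is where the standing assumption ``$\epsilon$ sufficiently small'', together with the freedom in $\kappa$, $\eta$ and the degree $N\le n$, is spent). This is really the crux of the construction: displacing the singularity out to $v=1+\eta$ is exactly what buys geometric decay of the Taylor coefficients --- hence a finite, controllable length --- but the price is the factor $1/\eta\asymp 1/\nu$ in $L(Q)$ and a degree budget $\gtrsim\ln(1/\epsilon)/(\eta\nu)=\Theta(\ln(1/\epsilon)/\nu^{2})$, matching the hypothesis on $n$. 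Balancing $\kappa=\Theta(1)$ and $\eta=\Theta(\nu)$ so as to keep simultaneously $\phi(0)>2\epsilon$, $\max_{C}|\phi|=\epsilon^{\Omega(1/\nu)}$, $L(Q)\le 2$, and $\deg Q\le n$ is the main difficulty; everything else reduces to the elementary monotonicity estimates for $\operatorname{Re}(1/(1+\eta-v))$ on circles and to standard Cauchy and geometric‑tail bounds.
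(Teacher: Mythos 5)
Your construction is a genuinely different route from the paper's, which builds $Q_c$ from Erd\'elyi's extremal polynomial having $\Theta(\ln(1/\eps)/\nu)$ repeated roots at $1$ (Theorem~\ref{thm:Erdelyi-extremal}), bounds it on a real segment near $1$ via Claim~\ref{clm:upper-bound-line}, and transfers that bound to the circle by Hadamard's three-circle theorem. Your analytic estimates for $\phi(v)=\eps^{\kappa/(1+\eta-v)}$ on the three relevant circles, and the Cauchy/geometric-tail bounds for the truncation, are correct. However, the step you yourself flag as the crux is a genuine gap, not a detail to be absorbed into constants: your length bound is $L(Q)= O(\eps^{\kappa/3}/\nu)$, and after rescaling to force $L\le 2$ the constant coefficient drops to order $\nu\,\eps^{2\kappa/3}$, which exceeds $2\eps$ only when $\nu \ge \Omega(\eps^{1-2\kappa/3})$. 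Since you need $\kappa=\Theta(1)$ bounded away from $0$ to get $\max_{v\in C}|\phi(v)|\le\eps^{4\kappa/\nu}=\eps^{\Omega(1/\nu)}$, no choice of $\kappa$ and $\eta$ makes the three requirements ($Q(0)>2\eps$, $L\le 2$, smallness on $C$) simultaneously satisfiable once $\nu$ is smaller than every fixed power of $\eps$. The theorem must hold for all $\nu\ge\sqrt{\ln(1/\eps)/n}$ with $n$ arbitrary, i.e.\ for $\nu$ arbitrarily small relative to $\eps$, and this is exactly the regime Theorem~\ref{thm:main-erasure} needs ($\err$ down to $\sqrt{16\ln(1/\eps)/n}$). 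The hypothesis ``$\eps<\tau$'' with $\tau$ an \emph{absolute} constant cannot be spent to enforce $\eps^{\kappa/3}\le\nu$.

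The obstruction is intrinsic to truncating this fixed analytic function: $|\phi|$ is about $\eps^{\kappa/2}$ near $v=-1$, so the aggregate size of its Taylor coefficients is genuinely of order $\eps^{\Theta(\kappa)}$ divided by the gap $\eta$ to the singularity, and the $1/\eta\asymp 1/\nu$ loss in $L(Q)$ is unavoidable. The paper sidesteps this entirely because Erd\'elyi's polynomial is constructed so that $|a_0|\ge 2\eps\cdot\sum_{j\ge 1}|a_j|$ holds with no dependence on $\nu$ whatsoever; the $\nu$-dependence enters only through the number of roots at $1$ and the geometry of the three-circle step. To repair your argument you would need to replace $\phi$ by an object whose length-to-constant-term ratio is $O(1/\eps)$ uniformly in $\nu$ --- and producing such an object is essentially the content of the extremal polynomial result the paper imports.
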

In order to prove this theorem, we will collect a few facts at the beginning.
Given $a,r>0$, define
the set $B_{a,r}$ as
\[
B_{a,r} = \big\{ (1-8a) + 4a(z + z^{-1}) \ : z \in \partial D_r(0)\big\}.
\]
We now make a few observations about the set $B_{a,r}$ as $r$ varies.
In particular, we have the following fact:
\begin{fact}~\label{fact:ellipse-1}
For $r \in \{1,2,4\}$, the sets $B_{a,r}$ are as follows:
\begin{itemize}
\item For $r=1$, the set $B_{a,r}$ is the line segment joining $1$ and $1-16 a$.
\item For $r=2$, the set $B_{a,r}$ is the ellipse centered at $1-8a$ with major axis
$[1-8a-10a, 1-8a + 10a]$ and minor axis $[1-8a + 6i, 1-8a-6i]$.
\item For $r=4$, the set $B_{a,r}$ is the ellipse centered at $1-8a$ with major axis
is  $[1-8a -17a , 1-8a  + 17a]$ and minor axis is $[1-8a +15i, 1-8a -15i]$.
\end{itemize}
\end{fact}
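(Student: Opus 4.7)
The plan is to parameterize the circle $\partial D_r(0)$ by $z = re^{i\theta}$ for $\theta \in (-\pi,\pi]$ and then do the elementary algebra for each of the three values of $r$. Specifically, writing
\[
z + z^{-1} \;=\; \Bigl(r + \tfrac{1}{r}\Bigr)\cos\theta \;+\; i\Bigl(r - \tfrac{1}{r}\Bigr)\sin\theta,
\]
we have
\[
(1-8a) + 4a(z + z^{-1}) \;=\; \Bigl[(1-8a) + 4a\bigl(r + \tfrac{1}{r}\bigr)\cos\theta\Bigr] \;+\; i\cdot 4a\bigl(r - \tfrac{1}{r}\bigr)\sin\theta,
\]
so as $\theta$ varies, the locus is (the image under a translation by $1-8a$ of) an axis-aligned ellipse with horizontal semi-axis $4a(r + 1/r)$ and vertical semi-axis $4a|r - 1/r|$, centered at $1-8a$.

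Next I would just plug in $r=1,2,4$. When $r=1$, the vertical semi-axis collapses to $0$ and the horizontal semi-axis is $8a$, so $B_{a,1}$ is the real segment from $1-16a$ to $1$. When $r=2$, we get semi-axes $4a\cdot\tfrac{5}{2}=10a$ and $4a\cdot\tfrac{3}{2}=6a$; since $r>1$ the horizontal is the major one, which recovers the ellipse stated in the lemma (with the minor axis endpoints $1-8a \pm 6a\,i$). When $r=4$, the semi-axes are $4a\cdot\tfrac{17}{4}=17a$ horizontally and $4a\cdot\tfrac{15}{4}=15a$ vertically, matching the last case (with minor axis endpoints $1-8a \pm 15a\,i$).

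There is really no obstacle here; the statement is a direct computation, and the only thing to be careful about is identifying which axis is ``major'' (for $r>1$, it is always the horizontal one, since $r + 1/r > |r - 1/r|$). I would present the proof as a single paragraph with the parameterization followed by a short table or list of the three substitutions.
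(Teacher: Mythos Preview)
Your proposal is correct; the paper does not give a proof of this fact (it is stated as an immediate computation), and your parameterization $z=re^{i\theta}$ followed by reading off the semi-axes $4a(r+1/r)$ and $4a|r-1/r|$ is exactly the straightforward verification one would expect. Your parenthetical correction that the minor-axis endpoints should be $1-8a\pm 6a\,i$ and $1-8a\pm 15a\,i$ (the paper's statement drops the $a$) is also right.
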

It is quite easy to observe that the circle $D_{4a}(1-4a)$ is contained in $B_{a,2}$. By Hadamard's three circle theorem, any holomorphic function $f$ satisfies
\begin{equation}~\label{eq:three-circ-bound}
\sup_{z \in D_{4a}(1-4a)} |f(z)| \le \sup_{z \in B_{a,2}} |f(z)| \le \sqrt{\sup_{z \in B_{a,1}} |f(z)| } \cdot \sqrt{\sup_{z \in B_{a,4}} |f(z)| }.
\end{equation}
Consequently, we have the following corollary.
\begin{corollary}~\label{corr:supremum-poly-circle}
Let $c \in \Delta$ and $Q_c(v) = \sum_{i=0}^n c_i v^i$. Then,
$$
\sup_{z \in D_{4a}(1-4a)} |Q_c(z)| \le \sqrt{\sup_{z \in B_{a,1}} |Q_c(z)| } \cdot 2 \sqrt{\exp(9an)}.
$$
\end{corollary}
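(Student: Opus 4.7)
The plan is to apply the three-circles inequality \eqref{eq:three-circ-bound} directly with $f = Q_c$, so essentially the only work is to upper-bound $\sup_{z \in B_{a,4}} |Q_c(z)|$. The three-circles chain (and the containment $D_{4a}(1-4a) \subseteq B_{a,2}$) is already in place, so I just need one elementary modulus estimate on $B_{a,4}$.

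First, every $z \in B_{a,4}$ has the form $z = (1-8a) + 4a(w + w^{-1})$ for some $w$ with $|w|=4$. The triangle inequality gives
\[
|z| \;\le\; |1-8a| + 4a\bigl(|w| + |w^{-1}|\bigr) \;=\; (1-8a) + 4a \cdot \tfrac{17}{4} \;=\; 1 + 9a,
\]
where I use $a \le 1/8$ (which holds in the intended application in Theorem~\ref{thm:erasure-upper-bound}, where $a = \nu/16 \le 1/160$) to drop the absolute value from $1-8a$.

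Next, since $c \in \Delta$ we have the length bound $L(Q_c) = \sum_i |c_i| \le 2$, and since $1+9a \ge 1$ we have $|z|^i \le (1+9a)^n$ for every $0 \le i \le n$. Hence
\[
|Q_c(z)| \;\le\; \sum_{i=0}^n |c_i| \cdot |z|^i \;\le\; L(Q_c) \cdot (1+9a)^n \;\le\; 2(1+9a)^n \;\le\; 2\exp(9an),
\]
using $1+x \le e^x$. Taking the supremum over $z\in B_{a,4}$ and then a square root yields $\sqrt{\sup_{z \in B_{a,4}}|Q_c(z)|} \le \sqrt{2}\sqrt{\exp(9an)} \le 2\sqrt{\exp(9an)}$. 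Substituting this into the right-hand inequality of \eqref{eq:three-circ-bound} gives precisely the claimed bound.

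There is really no obstacle here beyond the $|z|\le 1+9a$ estimate combined with $L(Q_c)\le 2$; the analytic machinery (maximum modulus on the ellipse $B_{a,2}$ and Hadamard's three-circles theorem on $1\le|w|\le 4$) is already invoked in \eqref{eq:three-circ-bound}. The only caveat worth flagging is the tacit requirement $a \le 1/8$, which is looser than the standing hypothesis $\nu \le 1/10$ of Theorem~\ref{thm:erasure-upper-bound}.
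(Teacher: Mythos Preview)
Your proof is correct and follows essentially the same route as the paper: apply \eqref{eq:three-circ-bound} with $f=Q_c$, then bound $\sup_{z\in B_{a,4}}|Q_c(z)|$ via $|z|\le 1+9a$ together with $L(Q_c)\le 2$ and $1+9a\le e^{9a}$. Your write-up is in fact slightly more careful than the paper's (you make explicit the step $\sqrt{2}\le 2$ and the tacit assumption $a\le 1/8$ needed for $|1-8a|=1-8a$).
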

\begin{proof}
We apply (\ref{eq:three-circ-bound}) to the function $Q_c$ and then observe that  $$\sup_{z \in B_{a,4}} |Q_c(z)| \le \sup_{z \in B_{a,4}}  |z|^n \cdot (\sum_{j=0}^n |c_j|) \le 2 \cdot (1 +9a)^n \le 2 \cdot \exp(9an),
$$
which concludes the proof.
\end{proof}

We  next recall the following theorem from \cite{Erdelyi16}:

\begin{theorem} [Lemma~3.3 of \cite{Erdelyi16}] \label{thm:Erdelyi-extremal}
For any $L \in [0,1/17)$ and $M \in \mathbb{N}$, there is a real-coefficient polynomial $p(z) = \sum_{j=0}^{M} a_j z^j$  with $|a_0| \ge L \cdot (\sum_{j=1}^M |a_j|)$ such that $p$ has at least
$ T_{L,M}= \min \{\frac{2}{7} \sqrt{M \cdot (-\ln L)}, M\}$ repeated roots at $1$.
\end{theorem}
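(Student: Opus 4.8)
The plan is to build $c$ from the extremal polynomial of Theorem~\ref{thm:Erdelyi-extremal} and then bound $|Q_c|$ on the target circle $\partial D_{\nu/16}(1-\nu/16)$ via the Hadamard three‑circle inequality~\eqref{eq:three-circ-bound} applied with $a:=\nu/64$, so that $4a=\nu/16$. The governing idea is that $Q_c$ should have a high‑order zero at $1$: this makes it minuscule on the real segment $B_{a,1}=[1-16a,1]$, while on the outer ellipse $B_{a,4}$ it is only mildly large provided its degree is taken far below $n$.

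\emph{Constructing $c$.} Invoke Theorem~\ref{thm:Erdelyi-extremal} with $L:=2\eps$ (legitimate since $\eps<\tau$ forces $L<1/17$) and $M:=\lfloor \ln(1/\eps)/(K\nu^2)\rfloor$ for a large absolute constant $K$; note $M\le \ln(1/\eps)/\nu^2\le n$, as required, and that (using $\nu\le 1/10$ and $\eps<\tau$) for this $M$ the minimum defining $T_{L,M}$ is attained by its first term. Thus the polynomial $p(z)=\sum_{j=0}^M a_j z^j$ produced has a zero at $1$ of multiplicity $k\ge \tfrac27\sqrt{M\ln(1/(2\eps))}=\Theta(\ln(1/\eps)/\nu)$, together with $|a_0|\ge L\sum_{j=1}^M|a_j|$. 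Since $p(1)=0$ we also have $|a_0|\le \sum_{j=1}^M|a_j|$, so rescaling $p$ by $t:=2/L(p)$ and negating if necessary yields a vector $c=(c_0,\dots,c_M,0,\dots,0)$ with $\sum_i c_i=0$, $L(Q_c)=2$ (hence $c\in\Delta$), and $c_0=t|a_0|\ge 2L/(L+1)>2\eps$, the last inequality being a one‑line computation using $L=2\eps$ and $\eps<1/2$. Write $Q_c(v)=(v-1)^k g(v)$ with $g$ a polynomial of degree $M-k$.

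\emph{Bounding $|Q_c|$ on $B_{a,1}$ and $B_{a,4}$.} Fix an auxiliary radius $\rho=\Theta(\nu)$ with $\rho\ge 16a$. On $\partial D_\rho(1)$ we have $|g|=|Q_c|/\rho^k\le L(Q_c)\,(1+\rho)^M/\rho^k$, so by the maximum modulus principle $\sup_{\overline{D_\rho(1)}}|g|\le 2(1+\rho)^M/\rho^k$; since $B_{a,1}=[1-16a,1]\subseteq \overline{D_{16a}(1)}\subseteq \overline{D_\rho(1)}$, every $v\in B_{a,1}$ satisfies
\[
|Q_c(v)| = |v-1|^k\,|g(v)| \le 2\,(16a/\rho)^k\,(1+\rho)^M = 2\exp\!\big(O(\nu M)-\Omega(k)\big),
\]
where we used $\rho=\Theta(\nu)$ and $16a/\rho=\Theta(1)\in(0,1)$. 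Plugging in $\nu M=\Theta(\ln(1/\eps)/\nu)$ and $k=\Theta(\ln(1/\eps)/\nu)$ and choosing $K$ large enough that the $\Omega(k)$ term dominates gives $\sup_{B_{a,1}}|Q_c|\le \eps^{\Omega(1/\nu)}$. On the outer ellipse, Fact~\ref{fact:ellipse-1} gives $|v|\le 1+9a$ for $v\in B_{a,4}$, hence $\sup_{B_{a,4}}|Q_c|\le L(Q_c)\,(1+9a)^M\le 2\exp(9aM)=\exp(O(\ln(1/\eps)/\nu))$ with a small implicit constant, since $a=\nu/64$ and $M=O(\ln(1/\eps)/\nu^2)$. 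Substituting both estimates into~\eqref{eq:three-circ-bound} applied to $f=Q_c$, and using that the (negative) exponent from $B_{a,1}$ exceeds in absolute value a large constant times the exponent from $B_{a,4}$, we conclude
\[
\sup_{v\in\partial D_{\nu/16}(1-\nu/16)}|Q_c(v)| \le \sqrt{\sup_{B_{a,1}}|Q_c|}\cdot\sqrt{\sup_{B_{a,4}}|Q_c|} \le \eps^{\Omega(1/\nu)},
\]
the residual $O(1)$ factor being absorbed into $\eps^{\Omega(1/\nu)}$ since $\eps<\tau$ and $\nu\le 1/10$.

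\emph{The main obstacle.} The delicate part is the joint calibration of the degree $M$ and the auxiliary radius $\rho$. Taking $M=n$, or $\rho$ a fixed constant, would blow $(1+\rho)^M$ or $\exp(9aM)$ up to $\exp(\Theta(n))$ or $\exp(\Theta(\ln(1/\eps)/\nu^2))$, swamping the gain $(16a/\rho)^k$; one must instead take $M=\Theta(\ln(1/\eps)/\nu^2)$ — which is exactly where the hypothesis $\ln(1/\eps)/\nu^2\le n$ enters — and $\rho=\Theta(\nu)$, and then verify that $k=\Theta(\sqrt{M\ln(1/\eps)})=\Theta(\ln(1/\eps)/\nu)$ wins the resulting one‑variable optimization with room to spare (this pins down $K$, and in turn $\tau$). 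The only other point requiring care is that the forced normalization $L(Q_c)=2$ does not destroy $c_0>2\eps$, which holds because $p(1)=0$ confines $|a_0|$ between $L\sum_{j\ge1}|a_j|$ and $\sum_{j\ge1}|a_j|$.
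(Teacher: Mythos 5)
The statement you were asked to prove is Theorem~\ref{thm:Erdelyi-extremal} itself: the existence, for any $L \in [0,1/17)$ and $M \in \mathbb{N}$, of a real-coefficient polynomial $p(z)=\sum_{j=0}^M a_j z^j$ with $|a_0| \ge L\cdot\sum_{j\ge 1}|a_j|$ having at least $\min\{\frac{2}{7}\sqrt{M\cdot(-\ln L)},\,M\}$ repeated roots at $1$. Your proposal never addresses this. Its very first step is to ``invoke Theorem~\ref{thm:Erdelyi-extremal} with $L:=2\eps$,'' i.e., you assume exactly the statement to be proven and then use it to derive a different result, namely the erasure-noise upper bound of Theorem~\ref{thm:erasure-upper-bound}. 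As a proof of the assigned statement this is circular/vacuous: nowhere do you construct a polynomial exhibiting the required trade-off between the multiplicity of the zero at $1$ and the dominance of the constant coefficient, and that trade-off is the entire content. The quantitative heart --- that one can force a zero at $1$ of order $\sqrt{M\ln(1/L)}$, rather than the much weaker order $\ln(1/L)$ obtained from the naive choice $p(z)=(1-z)^k$ (for which $\sum_{j\ge1}|a_j|=2^k-1$, so $|a_0|\ge L\sum_{j\ge1}|a_j|$ forces $k=O(\log(1/L))$) --- is precisely what is taken for granted. Note also that the paper itself offers no proof to compare against: it quotes the statement verbatim as Lemma~3.3 of \cite{Erdelyi16}, and a genuine proof would require reproducing Erd\'elyi's extremal construction, none of which appears in your write-up.

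For what it is worth, the argument you do give is essentially the paper's proof of Theorem~\ref{thm:erasure-upper-bound}: the same choice $L=2\eps$, degree $M=\Theta(\ln(1/\eps)/\nu^2)$ (which is where $\ln(1/\eps)/\nu^2\le n$ enters), a bound of the form $\eps^{\Omega(1/\nu)}$ on the segment $B_{a,1}$, the crude bound $2\exp(9aM)$ on $B_{a,4}$, and the three-circle inequality (\ref{eq:three-circ-bound}). The only substantive difference is that you obtain the segment bound by factoring $Q_c(v)=(v-1)^k g(v)$ and applying the maximum modulus principle to $g$ on a disc of radius $\Theta(\nu)$ about $1$, where the paper instead cites Claim~\ref{clm:upper-bound-line} (Lemma~5.4 of \cite{BEK:97}); that substitution is legitimate and arguably more self-contained. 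But it does not change the verdict: the statement you were assigned, the Erd\'elyi-type existence result, remains entirely unproven in your proposal.
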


We will also use the following result from \cite{BEK:97}:
\begin{claim} [Lemma~5.4 of \cite{BEK:97}] \label{clm:upper-bound-line}
Let $p: \mathbb{C} \rightarrow \mathbb{C}$ be defined as $p(x) = \sum_{j=0}^M a_j z^j$ where $|a_j| \le 1$ for all $0 \le j \le n$. Further, let $p$ have $k$ repeated roots at $1$. Let $A$ define the interval $[1-k/(9M), 1]$. Then
$$
\sup_{z \in A} |p(z)| \le (M+1) \bigg( \frac{e}{9}\bigg)^k.
$$
\end{claim}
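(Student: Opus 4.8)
The plan is to exhibit $c$ explicitly from Erd\'elyi's extremal polynomial (Theorem~\ref{thm:Erdelyi-extremal}), rescale it into $\Delta$, and then bound $|Q_c|$ on the target circle by chaining the Hadamard three-circles estimate of Corollary~\ref{corr:supremum-poly-circle} with the interval bound of Claim~\ref{clm:upper-bound-line}. First set $a=\nu/64$, so that $\partial D_{\nu/16}(1-\nu/16)=\partial D_{4a}(1-4a)$ and the segment $B_{a,1}$ of Fact~\ref{fact:ellipse-1} is exactly $[1-\nu/4,\,1]$. Apply Theorem~\ref{thm:Erdelyi-extremal} with $L=4\eps$ (legal since $\tau\le 1/68$) and $M=\lfloor \ln(1/\eps)/(C\nu^2)\rfloor$ for a suitable absolute constant $C$; for $\eps$ below an absolute constant this $M$ is a positive integer, and $M\le n$ by the hypothesis $n\ge \ln(1/\eps)/\nu^2$. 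This produces a real polynomial $p(z)=\sum_{j=0}^M a_j z^j$ with $|a_0|\ge L\sum_{j\ge 1}|a_j|$ that has $k\ge T_{L,M}=\tfrac27\sqrt{M\ln(1/L)}$ repeated roots at $1$. After replacing $p$ by $-p$ if needed and dividing through by $L(p)=\sum_j|a_j|$, the polynomial $Q_c:=p/L(p)$ has real coefficients of absolute value at most $1$, has $L(Q_c)=1\le 2$, satisfies $Q_c(1)=0$ (since $k\ge 1$), and has $c_0=a_0/L(p)\ge L/(1+L)>2\eps$. Hence $c\in\Delta$ with $c_0>2\eps$, and it only remains to bound $\sup|Q_c|$ on the circle.

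The next step bounds $|Q_c|$ on the segment $B_{a,1}=[1-\nu/4,1]$. The choice $M=\lfloor\ln(1/\eps)/(C\nu^2)\rfloor$ is calibrated from both sides. On the one hand it is small enough (using $\ln(1/L)\ge\tfrac12\ln(1/\eps)$ for $\eps$ small) that $k\ge T_{L,M}=\tfrac27\sqrt{M\ln(1/L)}\ge \tfrac94\,\nu M$, which is precisely the condition $1-k/(9M)\le 1-\nu/4$ needed for the interval $[1-k/(9M),1]$ of Claim~\ref{clm:upper-bound-line} to contain $[1-\nu/4,1]$. On the other hand, being of order $\ln(1/\eps)/\nu^2$, it is large enough that $k\ge T_{L,M}=\Omega(\ln(1/\eps)/\nu)$. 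Since $|c_i|\le 1$, Claim~\ref{clm:upper-bound-line} then gives $\sup_{z\in B_{a,1}}|Q_c(z)|\le (M+1)(e/9)^k\le (M+1)\cdot\eps^{\Omega(1/\nu)}$.

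Finally I would lift this to the whole circle via Corollary~\ref{corr:supremum-poly-circle}, noting that applied to a polynomial of degree $M$ its proof yields blow-up $\exp(9aM)$ in place of $\exp(9an)$. This gives
\[
  \sup_{v\in\partial D_{\nu/16}(1-\nu/16)}|Q_c(v)|
  \ \le\ \sqrt{\,\sup_{z\in B_{a,1}}|Q_c(z)|\,}\cdot 2\sqrt{\exp(9aM)}
  \ =\ 2\sqrt{M+1}\cdot\exp\!\big(-\tfrac{k}{2}\ln(9/e)+\tfrac92 aM\big).
\]
Because $aM=\tfrac{1}{64}\nu M$ while $k=\Omega(\nu M)$ with the implied constant large enough that $\tfrac12\ln(9/e)\cdot k>\tfrac92 aM$, the exponent is $-\Omega(\nu M)=-\Omega(\ln(1/\eps)/\nu)$, and the prefactor $2\sqrt{M+1}=\poly(\sqrt{\ln(1/\eps)}/\nu)$ is absorbed for $\eps$ small. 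Hence the right-hand side is $\eps^{\Omega(1/\nu)}$, which is the bound claimed by the theorem (reading the displayed exponent there as the positive quantity $\Omega(1/\nu)$).

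I expect the crux to be the trade-off, in the three-circles step, between the blow-up $\exp(9aM)$ incurred in pushing the estimate from the segment $[1-\nu/4,1]$ out to the full circle (which grows with $M$) and the decay $(e/9)^k$ on the segment (which improves with $M$ through $k\approx\sqrt{M\ln(1/\eps)}$), subject to the side condition that the segment $[1-k/(9M),1]$ reaches back to $1-\nu/4$ only once $k=\Omega(\nu M)$, i.e.\ once $M=O(\ln(1/\eps)/\nu^2)$. Reconciling $M\le n$, $k\ge\tfrac94\nu M$, and ``net exponent positive'' simultaneously is exactly what forces $M=\Theta(\ln(1/\eps)/\nu^2)$ and so explains why the hypothesis $n\ge\ln(1/\eps)/\nu^2$ is the natural one; the positivity check then comes down to confirming that the numerical decay rate $\tfrac12\ln(9/e)$ (per unit of $T_{L,M}$) dominates the blow-up rate $\tfrac92 a$ (per unit of $M$) once $C$ is fixed, which it does comfortably. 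Everything else — taking $\tau$ small enough that $L=4\eps<1/17$, that $M\ge1$ and $T_{L,M}\ge1$, and that the polynomial prefactor is swallowed — is routine.
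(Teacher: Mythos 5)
There is a genuine gap here, and it is a category error rather than a technical slip: the statement you were asked to prove is Claim~\ref{clm:upper-bound-line} itself --- a bound of the form $\sup_{z \in [1-k/(9M),1]} |p(z)| \le (M+1)(e/9)^k$ valid for \emph{every} polynomial $p(z)=\sum_{j=0}^M a_j z^j$ with $|a_j|\le 1$ and a zero of multiplicity $k$ at $z=1$ --- whereas what you wrote is a proof of Theorem~\ref{thm:erasure-upper-bound} (the construction of a specific $c\in\Delta$ via Theorem~\ref{thm:Erdelyi-extremal}, the rescaling, the application of the three-circles bound of Corollary~\ref{corr:supremum-poly-circle}, and the balancing of $\exp(9aM)$ against $(e/9)^k$). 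In the course of that argument you explicitly \emph{invoke} Claim~\ref{clm:upper-bound-line} as a known ingredient, so as a proof of the claim your write-up is circular: nothing in it establishes the inequality $\sup_{z\in A}|p(z)|\le (M+1)(e/9)^k$ for an arbitrary coefficient-bounded polynomial with $k$ repeated roots at $1$. (For what it is worth, your argument does track the paper's proof of Theorem~\ref{thm:erasure-upper-bound} quite closely, including the choice $M\approx\ln(1/\eps)/\nu^2$, the relation $\theta=k/(9M)=\Theta(\nu)$, and the three-circles trade-off --- but that is not the assigned statement.)

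For the claim itself, note that the paper does not prove it either; it is quoted as Lemma~5.4 of \cite{BEK:97}. A self-contained proof goes by Taylor expansion about $z=1$: since $p$ vanishes to order $k$ at $1$, one has $p(z)=\sum_{j= k}^{M} \frac{p^{(j)}(1)}{j!}(z-1)^j$, and the coefficient bound $|a_i|\le 1$ gives $|p^{(j)}(1)| \le \sum_{i=j}^{M} i(i-1)\cdots(i-j+1) \le (M+1)M^j$. For real $z$ with $|z-1|\le k/(9M)$ this yields
\[
|p(z)| \ \le\ (M+1)\sum_{j\ge k}\frac{(k/9)^j}{j!}\ \le\ (M+1)\,\frac{(k/9)^k}{k!}\sum_{m\ge 0} 9^{-m}\ \le\ C\,(M+1)\Bigl(\frac{e}{9}\Bigr)^{k},
\]
using $(k/9)^m k!/(k+m)!\le 9^{-m}$ and $k!\ge (k/e)^k$; the exact constant in \cite{BEK:97} is obtained by a slightly more careful version of the same computation. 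Some argument of this type (or an explicit appeal to the cited lemma, acknowledged as such) is what a proof of the stated claim must contain, and it is entirely missing from your proposal.
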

With these two results in hand, we are now ready to prove Theorem~\ref{thm:erasure-upper-bound}.
\begin{proofof}{Theorem~\ref{thm:erasure-upper-bound}}
Let us set $M  = \lfloor \ln(1/\epsilon)/\nu^2 \rfloor$ and let $p(z)  = \sum_{j=0}^M c_j z^j$  be the polynomial from Theorem~\ref{thm:Erdelyi-extremal} with $L = 2\epsilon$. Let us also scale the coefficients such that $|c_0| = 2\epsilon$ and thus $\sum_{j=0}^M |c_j| \leq 2$. As $\ln(1/\epsilon)/\nu^2 \le n$, $M \le n$ and thus our construction is well-defined. The polynomial $p$ has at least $T$ roots at $1$, where
$$
T = \min \bigg\{\frac{2}{7} \frac{\ln(1/\epsilon)}{\nu}, \frac{\ln(1/\epsilon)}{\nu^2}\bigg\} = \frac{2}{7} \frac{\ln(1/\epsilon)}{\nu}.
$$
Let us define $\theta = T/(9M)  = (2/63) \cdot \nu$.
By applying Claim~\ref{clm:upper-bound-line}, it follows that
$$\sup_{[1-\theta, 1]} |p(z)| \le  (M+1) \cdot \bigg( \frac{e}{9}\bigg)^T \le \bigg(\frac13\bigg)^T.$$
Here the last inequality uses the relation between $T$ and $M$ and $\epsilon \le \tau$. Finally, set $a = \nu/63$. Then, applying Corollary~\ref{corr:supremum-poly-circle}, we obtain
$$
\sup_{z \in D_{4a}(1-4a)} |p(z)| \le \sqrt{\sup_{z \in B_{a,1}} |p(z)| } \cdot 2 \sqrt{\exp(9aM)} \le \sqrt{\bigg( \frac{1}{3} \bigg)^T \cdot 4 \cdot  \exp(9aM)}.
$$
Plugging in $a = \nu/63$ and $T = (2 M \nu)/7$, we obtain that
$$
\sup_{z \in D_{4a}(1-4a)} |p(z)|  \le (1/\epsilon)^{\Omega(\nu)},
$$
which concludes the proof.
\end{proofof}

\section{Circle bounds for bit-flip noise}

\subsection{A lower  bound on $ \eta(\eps,\err)$ for bit-flip noise}

In this section we prove the following theorem:

\begin{theorem} \label{thm:bitflip-pos}
For $0 < \err,\eps < 1$ and $n \in \N$ which satisfy
${\frac {2 \ln(2/\eps)} n} \leq \err \leq 1 - {\frac {2 \ln(2/\eps)} n},$
we have
\[
\eta(\epsilon, \err) \geq\eps \cdot \exp \bigg( - O\bigg(\frac{\ln^{2/3}(1/\epsilon) \cdot (n(1-\err^2))^{1/3}}{\err^{2/3}}\bigg)\bigg).
\]
\end{theorem}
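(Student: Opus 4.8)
The plan is to invoke the polynomial reformulation from Corollary~\ref{cor:polys}: since
$\eta(\eps,\err) = \min_{c}\|cA\|_1 \ge \min_{c}\max_{|z|=1}|cA\mathfrak z| = \min_Q \max_{-\pi<\theta\le\pi} w(\theta)\,|Q(e^{i\theta})|$ --- the first inequality being lossless, i.e.\ no factor of $\sqrt{n+1}$ is incurred in this direction --- where $w(\theta) = \bigl(1 + \tfrac{(1-\err^2)\sin^2(\theta/2)}{\err^2}\bigr)^{-n/2}$ and $Q$ ranges over real‑coefficient polynomials of degree at most $n$ with $Q(0) > 2\eps$, $Q(1) = 0$, $L(Q)\le 2$, it suffices to show that any such $Q$ has
\[
\delta \;:=\; \max_{-\pi<\theta\le\pi} w(\theta)\,|Q(e^{i\theta})| \;\ge\; \eps\cdot\exp\!\Bigl(-O\Bigl(\tfrac{\ln^{2/3}(1/\eps)\,(n(1-\err^2))^{1/3}}{\err^{2/3}}\Bigr)\Bigr).
\]
We may assume $\delta < 1$, since otherwise there is nothing to prove.

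Next I would record two elementary pointwise bounds on the unit circle. First, $|Q(e^{i\theta})| \le L(Q) \le 2$. Second, by the very definition of $\delta$, $|Q(e^{i\theta})| \le \delta/w(\theta) = \delta\,e^{g(\theta)}$ with $g(\theta) = -\ln w(\theta) = \tfrac n2\ln\bigl(1 + \tfrac{(1-\err^2)\sin^2(\theta/2)}{\err^2}\bigr)\ge 0$; and using $\ln(1+x)\le x$ and $|\sin(\theta/2)|\le|\theta|/2$ we may replace $g$ by the clean quadratic majorant $g_0(\theta) := \tfrac{n(1-\err^2)}{8\err^2}\,\theta^2 \ge g(\theta)$. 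Combining, $\ln|Q(e^{i\theta})| \le \min\bigl(\ln 2,\ \ln\delta + g_0(\theta)\bigr)$ for every $\theta$.

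The core of the argument is Jensen's inequality: as $Q$ is a polynomial with $Q(0) > 2\eps > 0$ (the single boundary zero $Q(1)=0$ contributes only an integrable logarithmic singularity, handled by the usual limiting argument over radii $r\uparrow 1$, or via the sub‑mean‑value property of $\ln|Q|$),
\[
\ln(2\eps) \;<\; \ln|Q(0)| \;\le\; \frac{1}{2\pi}\int_{-\pi}^{\pi}\ln|Q(e^{i\theta})|\,d\theta \;\le\; \frac{1}{2\pi}\int_{-\pi}^{\pi}\min\bigl(\ln 2,\ \ln\delta + g_0(\theta)\bigr)\,d\theta .
\]
I would evaluate the right‑hand side by splitting the interval at $\theta_0 := \min\bigl(\pi,\ \sqrt{8\err^2\ln(2/\delta)/(n(1-\err^2))}\,\bigr)$, the point at which $\ln\delta + g_0$ crosses $\ln 2$ (capped at $\pi$): on $|\theta|\le\theta_0$ bound the integrand by $\ln\delta + g_0(\theta)$, using $\int_0^{\theta_0}g_0 = \tfrac{\theta_0}{3}g_0(\theta_0) \le \tfrac{\theta_0}{3}\ln(2/\delta)$; on $\theta_0<|\theta|\le\pi$ bound it by $\ln 2$. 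A short computation (using $\theta_0\le\pi$ to absorb the constant terms, and $\eps$ small) collapses the resulting inequality to $\theta_0\,\ln(1/\delta) = O(\ln(1/\eps))$. Plugging in $\theta_0$ then finishes it: if $\theta_0=\pi$ this already says $\ln(1/\delta) = O(\ln(1/\eps))$, which lies within the target bound because the hypothesis $\err \le 1 - 2\ln(2/\eps)/n$ forces $n(1-\err^2)\ge 2\ln(2/\eps)\ge\err^2\ln(1/\eps)$; and if $\theta_0 = \sqrt{8\err^2\ln(2/\delta)/(n(1-\err^2))}$ then (using $\ln(2/\delta)\ge\ln(1/\delta)$) we get $\ln(1/\delta)^{3/2} = O\bigl(\ln(1/\eps)\sqrt{n(1-\err^2)}/\err\bigr)$, i.e.\ $\ln(1/\delta) = O\bigl(\ln^{2/3}(1/\eps)\,(n(1-\err^2))^{1/3}/\err^{2/3}\bigr)$. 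In either case $\delta \ge \exp(-O(\cdots)) \ge \eps\exp(-O(\cdots))$, as desired.

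I expect the main obstacle to be not any single conceptual point but the bookkeeping in the split‑integral estimate --- in particular correctly handling the very‑small‑$\delta$ regime (where $\theta_0$ saturates at $\pi$) and checking that the side conditions on $\err$ let one fold the leftover $O(\ln(1/\eps))$ and $O(1)$ terms into the claimed exponent. A secondary point to handle carefully is the justification of Jensen's inequality in the presence of the boundary zero $Q(1)=0$, and verifying that the passage from $\eta(\eps,\err)$ to the polynomial extremal problem is genuinely lossless in the direction we need, so that the $\sqrt{n+1}$ appearing in Corollary~\ref{cor:polys} does not creep into the bound (since $\ln n$ need not be $O(\ln(1/\eps))$).
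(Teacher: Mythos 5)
Your proposal is correct, but it takes a genuinely different route from the paper. The paper also reduces (losslessly, via the triangle inequality $\max_{|z|=1}|cA\mathfrak{z}|\le\|cA\|_1$, exactly as you note) to lower bounding $\max_\theta w(\theta)|Q_c(e^{i\theta})|$ with the same Gaussian-type majorant $w(\theta)\ge\exp(-\tfrac{(1-\err^2)}{8\err^2}\theta^2 n)$; but at that point it invokes, as a black box, Corollary~3.2 of Borwein--Erd\'elyi: a polynomial with $|b_0|=1$ and coefficients bounded by $M$ must exceed $\exp(-c(1+\ln M)/a)$ somewhere on any arc of length $a$. Applying this with $M=1/c_0\le 1/(2\eps)$ on an arc of length $\theta^\ast\asymp \err^{2/3}\ln^{1/3}(1/\eps)/(n(1-\err^2))^{1/3}$ and balancing the two exponentials gives the bound. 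You instead prove the needed estimate from scratch via Jensen's formula, bounding $\ln|Q|$ on the circle by $\min(\ln 2,\ln\delta+g_0(\theta))$ and splitting the integral at the crossing point $\theta_0$; I checked the bookkeeping ($\int_0^{\theta_0}g_0=\tfrac{\theta_0}{3}g_0(\theta_0)$, the collapse to $\theta_0\ln(1/\delta)=O(\ln(1/\eps))$, and both the $\theta_0=\pi$ and $\theta_0<\pi$ cases, the former absorbed using $n(1-\err^2)\ge 2\ln(2/\eps)$) and it goes through; the Jensen step is legitimate since $Q(0)\ne 0$ and the boundary zero at $1$ only contributes an integrable singularity (or one can take $r\uparrow 1$ in Jensen's formula). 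Your argument is more self-contained --- it essentially reproves the relevant special case of the Borwein--Erd\'elyi arc lemma in the exact form needed, with the quadratic weight built in --- at the cost of the split-integral computation; the paper's version is shorter given the cited lemma but optimizes over the arc length $\theta^\ast$ by hand rather than letting the crossing point $\theta_0$ do it automatically.
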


\begin{proof}
Fix any vector  $[c_0 \ c_1 \ldots  c_n] \in \Delta$ with $|c_0| >2\epsilon.$
Recalling Theorem~\ref{thm:polys} and  (\ref{eq:Fczw}), to prove Theorem~\ref{thm:bitflip-pos} it suffices to show that the function $F_c(z)$ as defined in (\ref{eq:Fczw}) satisfies
\begin{equation} \label{eq:Fcgoal-lb}
\max_{|z| =1} |F_c(z)| \geq \eps \cdot \exp \bigg( -O \bigg(\frac{\ln^{2/3}(1/\epsilon) \cdot (n(1-\err^2))^{1/3}}{\err^{2/3}}\bigg)\bigg).
\end{equation}
To prove this, we recall (\ref{eq:FcQ}) which states that 
$$
 \max_{|z|=1} |F_c(z)| = \max_{-\pi < \theta \leq \pi}   \bigg( \frac{1}{1 + \frac{(1-\err^2) \sin^2(\theta/2) }{\err^2}}\bigg)
^{n/2} \cdot |Q_c(e^{i\theta})|.
$$
Next, we observe that  for $-\pi < \theta \leq \pi$, we have
\[
 1 - (1-\err^2) \sin^2(\theta/2) \in \left[1- {\frac {(1-\err^2) \theta^2} 4},1- {\frac {(1-\err^2) \theta^2}{16}}\right],
\]
where the last inclusion uses $\theta^2/16 \leq \sin^2(\theta/2) \leq \theta^2/4$, which holds for $\theta \in [-\pi,\pi]$.   Using the elementary fact $e^{-x} \le 1/(1+x)$ for all $x \ge 0$, it follows that
$$
\bigg( \frac{1}{1 + \frac{(1-\err^2) \sin^2(\theta/2) }{\err^2}}\bigg) \ge \exp \left( -\frac{1-\nu^2}{4\nu^2} \theta^2\right)
$$ 
and thus, we have 
\begin{equation}\label{eq:Fcz-lower1}
\max_{|z|=1} |F_c(z)| \ge  \max_{-\pi < \theta \leq \pi}  \exp \left( -\frac{1-\nu^2}{8\nu^2} \theta^2n\right) \cdot |Q_c(e^{i \theta})|.
\end{equation}
To finish the proof, we 
recall Corollary~3.2 of \cite{BE:97}:
\begin{theorem}[Corollary~3.2 of \cite{BE:97}] There is a universal constant $c>0$ such that the following holds:
Let $Q(z)$ be a univariate polynomial with complex coefficients, $Q(z)=\sum_{j=0}^n b_j z^j$ with $|b_0|=1$ and all coefficients $|b_j| \leq M.$  Let $A$ be a subarc of the unit circle with length $a$, where $0 < a < 2\pi$.  Then there is some $w \in A$ such that
\[
|Q(w)| \geq \exp \left({\frac {-c(1+\ln M)}{a}}\right).
\]
\end{theorem}
Applying this theorem to the polynomial $Q_c$  with its ``$M$'' set to $1/c_0 $ and its ``$a$'' set to $\theta^\ast$ and combining with (\ref{eq:Fcz-lower1}), we obtain 
$$
\max_{|z|=1} |F_c(z)| \ge  \max_{-\pi < \theta^\ast \leq \pi}  \exp \left( -\frac{1-\nu^2}{8\nu^2} \theta^{\ast 2}n\right) \cdot \exp \left(-\theta(1) \cdot \frac{1 + \ln (1/c_0)}{\theta^\ast} \right). 
$$
Finally set $\theta^{\ast}$ as
$$
\theta^\ast = \frac{1}{10} \cdot \frac{\nu^{2/3} \cdot \ln^{1/3}(1/\epsilon) }{(n(1-\nu^2))^{1/3}}, a
$$
and plug in the right hand side of the above expression (it is easy to see that the constraints on $\nu$ imply that $\theta^\ast \le 1$). This finishes the proof.  

\end{proof}

\subsection{An upper bound on $ \eta(\eps,\err)$ for bit-flip noise}

In this section we prove the following theorem:
\begin{theorem}~\label{thm:lower-bound}
There is a universal constant $c>0$ such that for $\err, 0< \epsilon < c$ and $n \in \mathbb{N}$ which satisfy
$ \big(\frac{2 \ln(2/\epsilon)}{n}\big)^{1/4} \le \err \le 1 -\frac{2 \ln(2/\epsilon)}{n}$, we have
\[
\eta(\epsilon, \err) = \exp \bigg( - \Omega \bigg(\frac{\ln^{2/3}(1/\epsilon) \cdot (n(1-\err^2))^{1/3}}{\err^{2/3}}\bigg)\bigg).
\]
\end{theorem}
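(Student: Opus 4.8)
The plan is to prove Theorem~\ref{thm:lower-bound} by exhibiting, for every $\eps,\err$ in the stated range, a single real-coefficient polynomial $Q$ of degree at most $n$ with $Q(0)>2\eps$, $Q(1)=0$ and $L(Q)\le 2$ for which the $\flip$-side quantity in Corollary~\ref{cor:polys},
$\max_{-\pi<\theta\le\pi}\Phi(\theta)\,|Q(e^{i\theta})|$, is at most $\exp(-\Omega(K))$, where
$\Phi(\theta):=\bigl(1+\tfrac{(1-\err^2)\sin^2(\theta/2)}{\err^2}\bigr)^{-n/2}$ and
$K:=\tfrac{\ln^{2/3}(1/\eps)\,(n(1-\err^2))^{1/3}}{\err^{2/3}}$. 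By Corollary~\ref{cor:polys} this gives $\eta(\eps,\err)\le\exp(-\Omega(K))$, which (together with Theorem~\ref{thm:bitflip-pos} for the matching lower bound) is the theorem.

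I would take $Q$ to be the $\flip$-side analogue of the erasure construction from the proof of Theorem~\ref{thm:erasure-upper-bound}. Set $M:=\Theta\!\bigl(\tfrac{\ln^{1/3}(1/\eps)\,(n(1-\err^2))^{2/3}}{\err^{4/3}}\bigr)$ with a suitable absolute constant. The hypothesis $\err\ge(2\ln(2/\eps)/n)^{1/4}$ is exactly what makes $M\le n$ (cubing, it says $\ln(1/\eps)(1-\err^2)^2\le n\err^4$, which holds since $(1-\err^2)^2\le1$), while $\err\le 1-2\ln(2/\eps)/n$ forces $n(1-\err^2)\ge 2\ln(2/\eps)$ and hence $M\gtrsim\ln(1/\eps)$. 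Apply Theorem~\ref{thm:Erdelyi-extremal} with $L=2\eps$ to get a real polynomial $p(v)=\sum_{j=0}^M c_jv^j$ with $|c_0|\ge 2\eps\sum_{j\ge1}|c_j|$ and with $T:=\tfrac27\sqrt{M\ln(1/(2\eps))}=\Theta(K)$ roots at $v=1$ (the bound $M\gtrsim\ln(1/\eps)$ ensures $T\le M$, so the $\min$ in Theorem~\ref{thm:Erdelyi-extremal} is inactive). Rescale and negate so that $c_0=2\eps$, hence $L(p)\le 2$, and set $Q:=p$ (with an infinitesimal further rescaling to make $Q(0)>2\eps$ strict, exactly as in the erasure proof). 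Then $Q(1)=0$ and $\deg Q=M\le n$, so $Q$ is a legal competitor in Corollary~\ref{cor:polys}.

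It remains to bound $\max_\theta\Phi(\theta)|Q(e^{i\theta})|$, and I would split at the threshold $\theta_0:=T/(72M)$. For $|\theta|\ge\theta_0$, since $\Phi$ is decreasing in $|\theta|$ on $[0,\pi]$ and $|Q(e^{i\theta})|\le L(Q)\le 2$, this range contributes at most $2\Phi(\theta_0)$; using $\sin^2(\theta_0/2)\ge\theta_0^2/16$, $\ln(1+x)\ge x/2$ for $x\in[0,1]$ (the hypothesis $\err\ge(2\ln(2/\eps)/n)^{1/4}$ forces $\tfrac{(1-\err^2)\theta_0^2}{\err^2}=O(1)$), and $\theta_0^2\asymp\tfrac{\ln^{2/3}(1/\eps)\err^{4/3}}{(n(1-\err^2))^{2/3}}$, one gets $\Phi(\theta_0)\le\exp\!\bigl(-\Omega(\tfrac{n(1-\err^2)\theta_0^2}{\err^2})\bigr)=\exp(-\Omega(K))$. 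For $|\theta|\le\theta_0$, bound $\Phi(\theta)\le1$ and control $|Q(e^{i\theta})|$ via the order-$T$ zero of $Q$ at $1$, using the ellipse machinery of the erasure section. Put $a:=\theta_0/2$, so that $16a=T/(9M)$ is exactly the interval $B_{a,1}=[1-16a,1]$ on which Claim~\ref{clm:upper-bound-line} applies with $k=T$; an elementary check (using Fact~\ref{fact:ellipse-1} and $\theta_0\lesssim 1$) shows the unit-circle arc $\{e^{i\theta}:|\theta|\le\theta_0\}$ lies inside the ellipse $B_{a,2}$. By Claim~\ref{clm:upper-bound-line} (after a harmless normalization of the coefficients to absolute value $\le1$), $\sup_{B_{a,1}}|Q|\le(M+1)(e/9)^T\le(1/3)^T$, the last step using that $T=\Theta(K)$ dominates $\ln M$ throughout the range: $K\ge(n(1-\err^2))^{1/3}\gg\ln(n(1-\err^2))$, $K\ge\ln^{2/3}(1/\eps)\gg\ln\ln(1/\eps)$, and (if $\err$ is small) $K\gtrsim\err^{-2/3}\gg\ln(1/\err)$, so $T\ge C\ln M$ for any fixed $C$ once $\eps$ is below an absolute constant. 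Meanwhile $\sup_{B_{a,4}}|Q|\le 2(1+9a)^M\le 2\exp(9aM)=2\exp(T/16)$. The Hadamard three-circles bound~\eqref{eq:three-circ-bound} then gives $\sup_{|\theta|\le\theta_0}|Q(e^{i\theta})|\le\sup_{B_{a,2}}|Q|\le\sqrt{(1/3)^T\cdot 2\exp(T/16)}=\exp(-\Omega(T))=\exp(-\Omega(K))$. Combining the two ranges completes the proof.

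The only real work — and where the two hypotheses on $\err$ earn their keep — is making four competing requirements on the single parameter $\theta_0$ (equivalently $M$) hold at once: $\theta_0$ must be large enough that $\Phi(\theta_0)=\exp(-\Omega(K))$; small enough that the arc near $1$ fits the interval where Claim~\ref{clm:upper-bound-line} is valid (forcing $\theta_0\lesssim T/M$, i.e.\ $M\lesssim\ln^{1/3}(1/\eps)(n(1-\err^2))^{2/3}/\err^{4/3}$); and $M$ must lie between $\ln(1/\eps)$ (so Erd\'elyi's lemma delivers $T=\Theta(\sqrt{M\ln(1/\eps)})$ roots) and $n$ (so $\deg Q\le n$). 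Verifying that the window $(2\ln(2/\eps)/n)^{1/4}\le\err\le 1-2\ln(2/\eps)/n$ is precisely what keeps this window nonempty, and that the side conditions $\tfrac{(1-\err^2)\theta_0^2}{\err^2}=O(1)$ and $T\gg\ln M$ come for free, is the substance of the argument; the rest is routine substitution. (In the sub-case where $M$ would be pushed below $\ln(1/\eps)$ — i.e.\ $n(1-\err^2)=\Theta(\ln(1/\eps))$, near the top of the $\err$ range, where $K\asymp\ln(1/\eps)$ — one can instead just use the two-atom difference vector $[\,2\eps\ \ {-2\eps}\ \ 0\ \cdots\ 0\,]$, whose image under $A$ has $\ell_1$-norm $\asymp \eps\err/\sqrt{n(1-\err^2)}\le\eps=\exp(-\Omega(K))$.)
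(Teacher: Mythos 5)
Your construction is the same as the paper's: the same choice $M\asymp n^{2/3}\ln^{1/3}(1/\eps)(1-\err^2)^{2/3}\err^{-4/3}$, the same Erd\'elyi polynomial from Theorem~\ref{thm:Erdelyi-extremal} rescaled to have constant coefficient $2\eps$ and $T=\frac27\sqrt{M\ln(1/2\eps)}$ roots at~$1$, the same split of the unit circle at $\theta^\ast\asymp T/M$, and the same treatment of the far arc via the decay of the prefactor together with $|Q|\le L(Q)\le 2$. The one place you diverge is the near arc $\{e^{i\theta}:|\theta|\le\theta^\ast\}$: the paper invokes Lemma~\ref{lem:bound-repeated} (Lemma~4.7 of \cite{BE:97}), which bounds a polynomial with many repeated roots at $1$ directly on an arc of the unit circle, whereas you instead use the real-interval bound of Claim~\ref{clm:upper-bound-line} on $B_{a,1}=[1-T/(9M),1]$ and transfer it to the arc through the Joukowski-ellipse/three-circles machinery (Fact~\ref{fact:ellipse-1}, inequality~\eqref{eq:three-circ-bound}), after checking that the arc sits inside $B_{a,2}$. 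Both routes are sound; yours has the virtue of reusing exactly the toolkit already set up for the erasure bound in Theorem~\ref{thm:erasure-upper-bound} (so only the interval lemma from \cite{BEK:97} is needed, not the arc lemma from \cite{BE:97}), at the cost of the extra geometric verification and a slightly worse constant in the exponent from the $\sqrt{\sup_{B_{a,4}}|Q|}\le\sqrt{2\exp(T/16)}$ factor. Your closing parenthetical sub-case is unnecessary: as you yourself note, $\err\le 1-2\ln(2/\eps)/n$ already guarantees $M\ge\ln(1/\eps)$ throughout the stated range, so the $\min$ in Theorem~\ref{thm:Erdelyi-extremal} is never active.
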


Recalling (\ref{eq:FcQ}), to prove this result we must demonstrate the existence of a vector $[c_0 \ c_1 \ldots  c_n] \in \Delta$, $|c_0| >2\epsilon$ such that $F_c(z)$ satisfies
\begin{equation} \label{eq:Fcgoal-ub}
\sup_{\Vert z \Vert=1} |F_c(z)| = \exp \bigg( \Omega \bigg(- \frac{\ln^{2/3}(1/\epsilon) \cdot (n(1-\err^2))^{1/3}}{\err^{2/3}}\bigg)\bigg),
\end{equation}
where we recall from Equation (\ref{eq:Fc}) that
 \[
F_c(z) = \sum_{i=0}^n c_i \left(\frac{1-\err}{2} + \frac{1+\err}{2}z\right)^i \left(\frac{1+\err}{2} + \frac{1-\err}{2}z\right)^{n - i}.
\]

To prove this, we will use Theorem~\ref{thm:Erdelyi-extremal} and
the following lemma, which relates the multiplicity of roots of a polynomial at $1$
with the supremum of $p$ on an arc centered at $1$.

\begin{lemma} [Lemma~4.7 in \cite{BE:97}] \label{lem:bound-repeated}
Suppose $p: \mathbb{C} \rightarrow \mathbb{C}$ is a polynomial of the form $p(z) = \sum_{j=0}^M a_j z^j$, where $|a_j| \le 9$ and $p$ has $k$ repeated roots at $1$. If $A$ denotes the arc of the unit circle that is symmetric around $1$ and has length $(2k)/(9M)$,
then
\[
\sup_{z \in A} |p(z)| \le 9(M+1) \cdot \bigg( \frac{e}{9}\bigg)^k.
\]
\end{lemma}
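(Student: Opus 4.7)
\textbf{Proof plan for Lemma~\ref{lem:bound-repeated}.} The plan is to exploit the $k$ repeated roots at $1$ via factorization together with a Taylor expansion around $z=1$. First, write $p(z) = (z-1)^k q(z)$, where $q$ is a polynomial of degree at most $M-k$. For $z = e^{i\theta}$ on the arc $A$ we have $|\theta| \le k/(9M)$, so
\[
|z-1| \;=\; 2\,|\sin(\theta/2)| \;\le\; |\theta| \;\le\; \frac{k}{9M}.
\]
Thus the target bound $|p(z)| \le 9(M+1)(e/9)^k$ reduces to showing $|q(z)| \le 9(M+1)(eM/k)^k$ on $A$ (up to a harmless constant that will be absorbed at the end).

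Next, I would Taylor-expand $q$ around $z=1$: writing $h = z-1$,
\[
q(1+h) \;=\; \sum_{j=0}^{M-k} b_j\, h^j, \qquad b_j \;=\; \frac{p^{(k+j)}(1)}{(k+j)!}.
\]
Using the coefficient bound $|a_\ell| \le 9$ on $p$ together with the hockey-stick identity $\sum_{\ell=m}^M \binom{\ell}{m} = \binom{M+1}{m+1}$, each Taylor coefficient satisfies
\[
|b_j| \;\le\; \frac{1}{(k+j)!}\sum_{\ell=k+j}^{M} \frac{\ell!}{(\ell-k-j)!}\cdot 9 \;=\; 9\binom{M+1}{k+j+1}.
\]

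The final step is to check that on $A$ the series $\sum_j |b_j|\,|h|^j$ is dominated by a geometric series of ratio at most $1/9$. Indeed, for $|h| \le k/(9M)$,
\[
\frac{|b_{j+1}|\,|h|}{|b_j|} \;\le\; \frac{(M-k-j)\,k}{9M(k+j+2)} \;\le\; \frac{1}{9},
\]
using $(M-k-j)k \le Mk \le M(k+j+2)$. Hence $|q(1+h)| \le \tfrac{9}{8}|b_0| \le \tfrac{81}{8}\binom{M+1}{k+1}$. Combining with the first paragraph and the standard estimate $\binom{M+1}{k+1} \le (eM/k)^{k+1}$ gives
\[
|p(z)| \;\le\; \left(\frac{k}{9M}\right)^{k} \cdot \frac{81}{8}\binom{M+1}{k+1} \;\le\; \frac{81e}{8}\cdot \frac{M}{k}\cdot \left(\frac{e}{9}\right)^{k},
\]
which sits below $9(M+1)(e/9)^k$ for all $k\ge 1$, $M\ge k$. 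The main obstacle is calibrating constants: the arc length $2k/(9M)$ in the statement is chosen precisely so that the geometric-series ratio comes out to $1/9$, and so that the product $(k/(9M))^k\binom{M+1}{k+1}$ collapses via Stirling to the clean factor $(e/9)^k$; any looser choice of arc length would inflate the final bound.
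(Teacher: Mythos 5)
The paper itself gives no proof of this lemma---it is imported verbatim as Lemma~4.7 of \cite{BE:97} and used as a black box---so the relevant comparison is with the standard Borwein--Erd\'elyi argument, which your proposal essentially reconstructs: factor out $(z-1)^k$, bound the Taylor coefficients of the quotient at $1$ using the coefficient bound on $p$ together with the hockey-stick identity, and note that the arc length is calibrated so that the resulting series is geometric with ratio $1/9$. Those steps all check out: $b_j = p^{(k+j)}(1)/(k+j)!$ is indeed the $j$-th Taylor coefficient of $q$ at $1$, the bound $|b_j|\le 9\binom{M+1}{k+j+1}$ is correct, and the ratio estimate $\frac{(M-k-j)k}{9M(k+j+2)}\le\frac19$ is valid, provided it is read (as it must be) as the ratio of consecutive \emph{upper bounds} $9\binom{M+1}{k+j+1}|h|^j$ rather than of the $|b_j||h|^j$ themselves, since the true coefficients need not decay monotonically.

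The one genuine defect is the final line. The quantity $\frac{81e}{8}\cdot\frac{M}{k}\cdot(e/9)^k$ does \emph{not} sit below $9(M+1)(e/9)^k$ for all $k\ge 1$: for $k=1$ it would require $\frac{9e}{8}M\le M+1$, which fails for every $M\ge 1$. The culprit is the estimate $\binom{M+1}{k+1}\le(eM/k)^{k+1}$, whose exponent $k+1$ leaves behind the spurious factor $eM/k$. The repair is to write instead $\binom{M+1}{k+1}=\frac{M+1}{k+1}\binom{M}{k}\le\frac{M+1}{k+1}\cdot\frac{M^k}{k!}\le\frac{M+1}{k+1}\left(\frac{eM}{k}\right)^{k}$, which yields
\[
|p(z)|\;\le\;\left(\frac{k}{9M}\right)^{k}\cdot\frac{81}{8}\cdot\frac{M+1}{k+1}\left(\frac{eM}{k}\right)^{k}\;=\;\frac{81}{8}\cdot\frac{M+1}{k+1}\cdot\left(\frac{e}{9}\right)^{k}\;\le\;9(M+1)\left(\frac{e}{9}\right)^{k},
\]
since $\frac{81}{8(k+1)}\le\frac{81}{16}<9$ for $k\ge1$; the case $k=0$ is immediate from $|p(z)|\le\sum_j|a_j|\le 9(M+1)$. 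With that one-line correction your proof is complete and matches the intended argument.
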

\begin{proofof}{Theorem~\ref{thm:lower-bound}}
With these results in hand we are ready to specify our construction of $[c_0, \ldots  c_n]$. For this, we set $M$ as follows:
$$
M = \lfloor n^{2/3} \cdot \ln^{1/3} (1/\epsilon) \cdot (1-\err^2)^{2/3} \cdot \err^{-4/3} \rfloor .
$$
We first make the following observations about $M$. (i) Since $\err^4 \ge \frac{\ln(1/\epsilon)}{n}$, it is the case that $M \le n$. (ii) Since $1- \err \ge 2 \ln (2/\epsilon)/n$, it is moreover the case that $M \ge \ln (1/\epsilon)$.

For $M$ as defined above,
let us rescale the polynomial in Theorem~\ref{thm:Erdelyi-extremal} so that $|a_0| = 2\epsilon$ and thus, $\sum_{j=1}^M |a_j| \le 1$. We now set $c_j = a_j$ for all $1 \le j \le M$ and $c_j=0$ otherwise. Note that since $M \le n$, this is well-defined.

  By construction, the polynomial
  $p(z)$ defined as $p(z) = \sum_{j=0}^N c_j z^j$ has at least
  $T$ repeated roots at $1$, where
  $$
  T =  \min \bigg\{\frac{2}{7} \sqrt{M \cdot \ln (1/2 \epsilon)}, M\bigg\} = \frac{2}{7} \sqrt{M \cdot \ln (1/2 \epsilon)},
  $$
  where the last equality uses $1- \err \ge 2 \ln (2/\epsilon)/n$. We note for later reference that
  \begin{equation} \label{eq:Tbound}
  T = \Omega\left(n^{1/3} \cdot \ln^{2/3}(1/\epsilon) \cdot (1-\err^2)^{1/3} \cdot \err^{-2/3}\right).
  \end{equation}
  Let us define $\theta^{\ast}$ as
\begin{equation} \label{eq:thetaastdef}
  \theta^\ast = \frac{2T}{9M} = {\frac 4 {63}} \sqrt{{\frac {\ln(1/2\eps)} M}} \leq \frac{4}{63} \cdot \frac{\ln^{1/3}(1/\epsilon) \cdot \err^{2/3}}{n^{1/3} \cdot (1-\err^2)^{1/3} }.
\end{equation}
  Observe that since $1-\err \ge 2 \ln(1/\epsilon)/n$, it holds that $\theta^\ast \le 4/63$.
  Let $A$ be the arc of the unit circle $A = \{e^{i \theta} | -\theta^\ast \le \theta \le \theta^\ast\}$.
  Applying Lemma~\ref{lem:bound-repeated} (and observing that all degree $M+1$ and higher coefficients of $p$ are zero), we obtain that
  \begin{equation}\label{eq:bound-p}
  \sup_{z \in A} |p(z)| = 9 \cdot (M +1) \cdot \bigg( \frac{e}{9}
 \bigg)^T \le \bigg( \frac{1}{3}
 \bigg)^T. \end{equation}
  Here the last inequality uses $T = \frac{2}{7} \sqrt{M \cdot \ln (1/2 \epsilon)}$ and the fact that $\eps$ is at most some sufficiently small constant.

 \ignore{
 observe that the polynomial $F_c(z)$ can be expressed as
  $$
  F_c(z) = \left(\frac{1+\err}{2} + \frac{1-\err}{2}z\right)^{n} \cdot \bigg(\sum_{i=0}^n c_i \bigg(\frac{\frac{1-\err}{2} + \frac{1+\err}{2}z}{\frac{1+\err}{2} + \frac{1-\err}{2}z} \bigg)^i \bigg).
  $$
  }
Now we turn our attention to $F_c(z).$  Recalling (\ref{eq:Fczw}), we have that
\begin{equation} \label{eq:rephrase}
 \sup_{|z|=1} |F_c(z)| = \sup_{|w|=1}\left| \bigg( \frac{\err}{\frac{1+\err}{2} -\frac{1-\err}{2} w}\bigg)^n   \cdot \sum_{i=0}^n c_i w^i\right|.
 \end{equation}
 Let us write $\Phi_c(w)$ to denote $\left( \frac{\err}{\frac{1+\err}{2} -\frac{1-\err}{2} w}\right)^n   \cdot \sum_{i=0}^n c_i w^i$, so we seek to upper bound $\sup_{|w|=1} |\Phi_c(w)|$.
 We do this by upper bounding $|\Phi_{c}(w)|$ separately on the sets $A$ and $\overline{A}$.

 First, we bound $|\Phi_{c}(w)|$ in the set $A$ as follows:
 \begin{equation}\label{eq:bound-Fc-A}
 \sup_{w \in A} |\Phi_c(w)| \le \sup_{w \in A} |p(w)| \le e^{-\Omega(T)} =  \exp \bigg( - \Omega \bigg(\frac{\ln^{2/3}(1/\epsilon) \cdot (n(1-\err^2))^{1/3}}{\err^{2/3}}\bigg)\bigg).
 \end{equation}
 Here the first inequality uses the fact that $\bigg| \frac{\err}{\frac{1+\err}{2} -\frac{1-\err}{2} w}\bigg| \le 1$, the second inequality uses (\ref{eq:bound-p}), and the last equality uses (\ref{eq:Tbound}).

 To bound $|\Phi_c(w)|$ in $\overline{A}$, we will need a couple of facts. First, since $\sum_{j=0}^n |c_j| \le 2$, it is the case that $|p(w)|\le 2$ for all $|w|=1$, and consequently
 \[
 |\Phi_c(w)| \le 2 \left|\frac{\err}{\frac{1+\err}{2} -\frac{1-\err}{2} w} \right|^n.
 \]
Recalling (\ref{eq:burger}), \ignore{writing $w=e^{i\theta}$ we have that
$$
 \bigg| \frac{\err}{\frac{1+\err}{2} -\frac{1-\err}{2} w}\bigg|^2  = \frac{2\err^2}{(1-\cos \theta) + \err^2 (1 + \cos \theta) } =  \bigg( \frac{1}{1 + \frac{(1-\err^2) \sin^2(\theta/2) }{\err^2}}\bigg).
$$

Thus }we have
 $$
\sup_{w \in \overline{A}} |\Phi_c(w)| \le 2  \bigg( \frac{1}{1 + \frac{(1-\err^2) \sin^2 (\theta^\ast/2) }{\err^2}}\bigg)^{n/2} \le 2  \bigg( \frac{1}{1 + \frac{(1-\err^2)  (\theta^\ast)^2 }{8\err^2}}\bigg)^{n/2},
$$
where the last inequality uses $\sin^2(\theta^\ast/2) \geq (\theta^\ast)^2/8$ which holds since $\theta^\ast \le 4/63$.
Finally, again using
$\err^4 \ge \ln(1/\epsilon)/n$ and recalling (\ref{eq:thetaastdef}), we have
$
\frac{(1-\err^2)  (\theta^\ast)^2 }{8\err^2} \le 4/63
$ (with room to spare).
Thus, we have that
\begin{align*}
 \sup_{w \in \overline{A}} |\Phi_c(w)| &\le  2  \bigg( \frac{1}{1 + \frac{(1-\err^2)  (\theta^\ast)^2 }{8\err^2}}\bigg)^{n/2}
 \le \exp \bigg( -\Omega \bigg(  \frac{(1-\err^2)  (\theta^\ast)^2 n }{\err^2}\bigg) \bigg) \\
& \le \exp \bigg( - \Omega \bigg(\frac{\ln^{2/3}(1/\epsilon) \cdot (n(1-\err^2))^{1/3}}{\err^{2/3}}\bigg)\bigg),
 \end{align*}
where for the last inequality we used $\theta^\ast = \Theta(1) \cdot \frac{\ln^{1/3}(1/\epsilon) \cdot \err^{2/3}}{n^{1/3} \cdot (1-\err^2)^{1/3} }$, which follows from (\ref{eq:thetaastdef}). Combining with (\ref{eq:bound-Fc-A}) finishes the proof.
 \end{proofof}

\subsection*{Acknowledgments}
A.~D.  would like to thank Mike Saks for suggesting the noisy population recovery problem for unrestricted support  and for many illuminating conversations about this problem. The authors would like to thank Tamas Erd\'{e}lyi for several helpful email exchanges about \cite{Erdelyi16}. 
\bibliographystyle{alpha}
\bibliography{allrefs}

\end{document}